\documentclass[twoside,leqno,twocolumn]{article}
\usepackage{wrapfig}
\usepackage[affil-it]{authblk}
\usepackage{amsmath, amsfonts, amssymb, graphicx, float}
\usepackage{amsbsy}
\usepackage{mathrsfs,multirow}
\usepackage{bm,comment}
\usepackage{verbatim}
\usepackage{caption,subcaption}
\captionsetup{compatibility=false}
\usepackage[bookmarks]{hyperref}
\usepackage{algorithm}
\usepackage{url,color}
\usepackage{latexsym}
\usepackage{dblfloatfix}

\newcommand{\secref}[1]{Section~\ref{#1}}
\newcommand{\figref}[1]{Figure~\ref{#1}}

\newcommand{\algref}[1]{Algorithm~\ref{#1}}
\newcommand{\remove}[1]{}

\newtheorem{theorem}{Theorem}
\newtheorem{problem}{Problem}
\newtheorem{formulation}{Formulation}
\newtheorem{conjecture}{Conjecture}
\newtheorem{lemma}{Lemma}
\newtheorem{definition}{Definition}
\newtheorem{corollary}{Corollary}
\newenvironment{proof}[1][Proof]{\begin{trivlist}
\item[\hskip \labelsep {\bfseries #1}]}{\end{trivlist}}

\def\DD{\boldsymbol{D}}

\def\II{\boldsymbol{I}}
\def\LL{\boldsymbol{\mathcal{L}}}
\def\AAcal{\boldsymbol{\mathcal{A}}}
\def\TT{\boldsymbol{T}}
\def\XX{\boldsymbol{X}}

\def\WW{\boldsymbol{W}}
\def\AA{\boldsymbol{A}}
\def\MM{\boldsymbol{M}}

\def\ppi{\boldsymbol{\pi}}
\def\PPi{\boldsymbol{\Pi}}

\def\GGamma{\boldsymbol{\Gamma}}

\def\PPP{\mathbb{P}}
\def\WWW{\mathbb{W}}

\def\ppi{\boldsymbol{\pi}}


\title{Network Composition from Multi-layer Data}
\author{Kristina Lerman}
\affil{Information Sciences Institute, University of Southern California}
\author{Shang-Hua Teng}
\affil{Computer Science, University of Southern California}
\author{Xiaoran Yan}
\affil{Network Science Institute, Indiana University}

\begin{document}
\date{\vspace{-5ex}}
\maketitle

\begin{abstract}

It is common for people to access multiple social networks,  
  for example, using phone, email, and social media.
Together, the multi-layer social interactions form a 
  ``integrated social network.''
How can we extend well developed knowledge about single-layer networks, 
including vertex centrality and community structure, to such heterogeneous structures?
In this paper, we approach these challenges by proposing a principled framework of {\em
  network composition} based on a unified dynamical process. 
Mathematically, we consider the following abstract problem: 
Given {\em multi-layer network data}, $(G^1,\ldots,G^l)$ over a 
vertex set $V$ and additional parameters for {\em intra and inter-layer dynamics},
construct a (single) weighted network $G$ that best integrates the joint process. We use transformations of dynamics to unify heterogeneous layers under a common dynamics. For inter-layer compositions, we will consider several cases as the inter-layer dynamics plays different roles in various social or technological networks. Empirically, we provide examples to highlight the usefulness of this framework for network analysis and network design. 
\end{abstract}

\section{Introduction}
\label{sec:introduction}
As a powerful representation for many complex systems, networks model entities and their interactions as vertices and edges. Studies of network structures, including those of \emph{vertex centrality} and \emph{community structure} have lead to fundamental insights into the organization and function of social, biological and technological systems~\cite{newman2010networks, bonacich1987power, Fortunato10}. On top of these network structures, different dynamical processes unfold~\cite{Borgatti05, ghosh_rethinking_2012, Ghosh2014KDD}. Our ability to model and predict dynamic network phenomena has led to new applications ranging from ranking web pages to maximizing social influence and controlling epidemics~\cite{Page99thepagerank, Colizza2006, Kempe03}.

Traditionally, most research has focused on the simple graph representation where all verticies and edges are of a single type. More recently, there has been great interest in going beyond such a homogeneous model to investigate networks that are capable of capturing multiple types of connections. Extensive efforts towards such heterogeneous models came from both social~\cite{verbrugge1979multiplexity, wasserman1994social} and computational disciplines~\cite{kolda2009tensor, acar2009unsupervised}, as the simple graph abstraction are often too crude a description of reality. For example, it is very common for people to have interactions across multiple social networks, including neighbors, coworkers, and also online interactions through email and social platforms, such as Facebook and Twitter. Each of these networks underlies a different type of social interactions. Because of the different origins and motivations, many names have been given to such heterogeneous models including but not limited to multiplex 
network, multi-
relational networks, and networks of networks. For a comprehensive review see~\cite{kivela_multilayer_2013}. In this paper, we adopt their terminology and use the general model of {\em multi-layer networks}, with {\em multiplex network} being a special case when inter-layer structures are absent.

Structure and dynamics of multi-layer networks have been explored in both theoretical graphs and real world data~\cite{balcan_multiscale_2009, buldyrev_catastrophic_2010, de_domenico_random_2013, gomez_diffusion_2013,gallotti_multilayer_2015}, with several researchers generalizing community structure and centrality measures to multiple edge types~\cite{mucha_community_2010, Michoel2012spectral,hu_multislice_2012,bazzi_community_2015,sole-ribalta_centrality_2014, taylor_eigenvector-based_2015}.
%
However, it remains an open research question as how to build a multi-layer network in the first place. Such networks are often constructed simply by stacking or projecting layers into a single network. When inter-layer edges are explicitly modeled, they usually appear as tunable parameters, despite the fact that general theoretical framework allows much richer representations~\cite{kivela_multilayer_2013}. One challenge for modeling inter-layer structures is that they are empirically difficult to measure in most cases~\cite{gallotti_multilayer_2015,gallotti_information_2015}.

In this paper, instead of proposing multi-layer generalizations of 
  network measures, we
  approach the problem by constructing 
  a single composed network which integrates the multi-layer data.
We propose a two-stage framework for multi-layer network composition based on a unified dynamical process, as illustrated by \figref{fig:framework}. Specifically, the first stage address the layer heterogeneity through layer transformations. In \secref{sec:layer}, we discuss how to transform the layers into homogeneous Markov processes using the framework of the parameterized Laplacian~\cite{Ghosh2014KDD}. In \secref{sec:inter}, we will discuss the second stage, which consists of several ways of combining layers in commonly seen social and technological networks. They include how to construct multiplex networks as well as multi-layer structures with observed inter-layer dynamics. In the multi-layer case, we choose to model a vertex's inter-layer transitions, i.e., its participation in the different network layers, as another Markov process. Therefore, we can treat the combined layers of interlinked dynamical processes as a joint Markov process itself.
We will prove that under this view a unique composition of a multi-layer network exists. In practice, however, it is difficult to fully observe inter-layer transitions. Hence, we will also consider the problem of network composition with partial information, for example, knowing only the stationary distribution of a inter-layer Markov process. 
Together, these dynamical process based transformations and compositions capture the heterogeneous structure while leaving a unified underlying topology. As a result, we can directly apply existing network algorithms of vertex centrality and community detection to the correctly composed joint structure. Some of the applications of multi-layer formalism to real-world data are discussed in \secref{sec:exp}.

\begin{figure}
    \centering
    \includegraphics[width=0.5\textwidth]{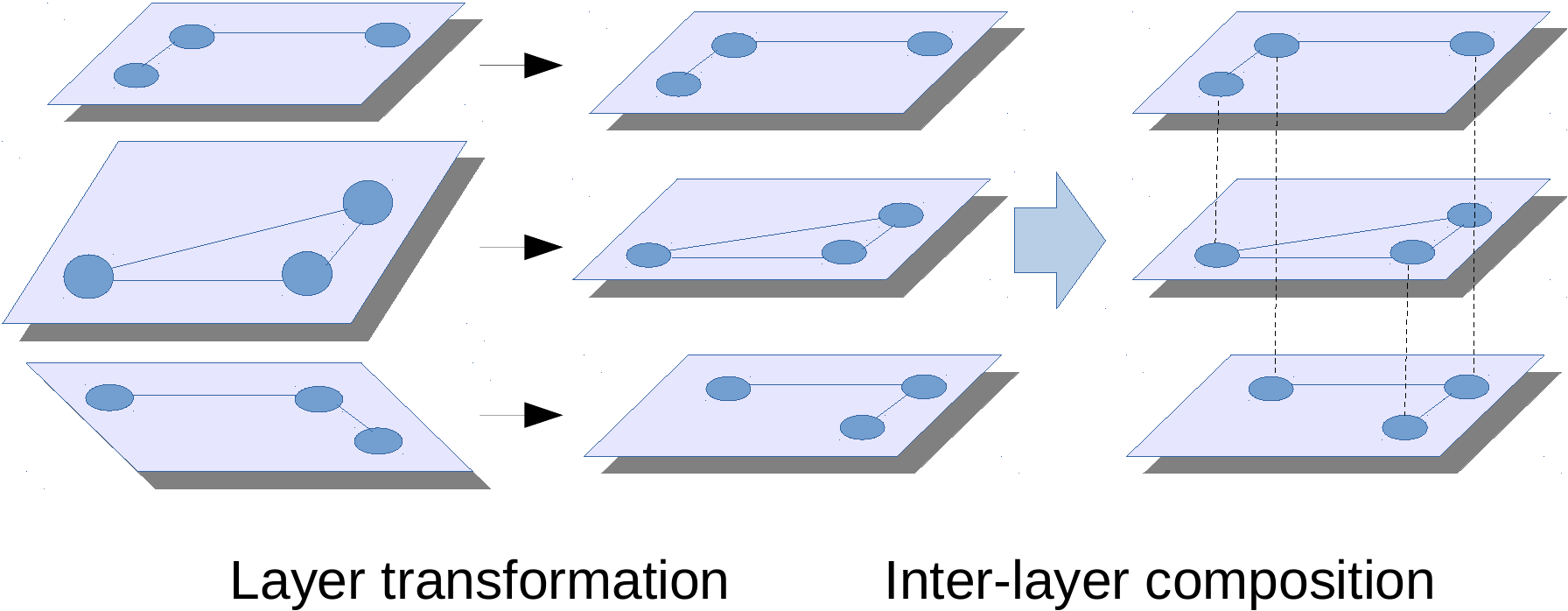}
    \caption{A two-stage framework for multi-layer composition}
    \label{fig:framework}
\end{figure}

\section{Preliminaries}
\def\QQ{\boldsymbol{Q}}

\label{sec:preliminary}
In this section, 
  we introduce some basic notations 
  for multi-layer networks and dynamical processes.
\vspace*{0.05in}

\noindent {\em Single-layer data}:
A standard network is represented by weighted directed
 graph $G = (V,E,\AA)$, where 
 $V = \{1,...,n\}$ and for $u,v\in V$, 
 $a_{uv}\geq 0$ assigns an affinity weight to edge $(u,v)\in E$.
We follow the convention that $a_{uv} = 0$ if and only if 
  $(u,v)\not\in E$. 
$G$ may have self-loops, and edges in $G$ are assumed to be 
  directed.
In other words, the weighted adjacency matrix $\AA$ can be asymmetric
  and can have none-zero entries on the diagonal.
For $u \in V$, let $d^{out}_u =  \sum_{v=1}^n a_{u,v}$ denote the {\em
  out-degree}
  of vertex $u$.
Similarly, let $d^{in}_u =  \sum_{v=1}^n a_{v,u}$ denote the 
 {\em in-degree}  of vertex $u$.
In this paper, we use $\DD_{\AA}$ (or $\DD$ when the context is clear)
   to denote the diagonal matrix whose entries 
  are out-degrees.
\vspace*{0.05in}

\noindent {\em Multi-layer data}:
We consider {\em vertex-aligned} multi-layer networks
\cite{kivela_multilayer_2013}. 
We usually use $l$ to denote number of layers,
  and use $G^i= (V,E^i,\AA^i)$ to denote the network
  at $i^{th}$ layer.
For clarity, we will use superscripts $i,j,r$ 
  for the layers and subscripts $u,v,w$ for vertices. 
Note that the vertex set $V$ is the same across the layers.
\figref{fig:example} is a toy example
  of a three-layer network,
  consisting of (hypothetical) phone contacts, email exchanges
  and Facebook friendships of four users.
In the figure, users appear in multiple layers, connected 
  by a dashed line.
\begin{figure}
\begin{center}
  \begin{subfigure}[b]{0.38\textwidth}
    \includegraphics[width=\textwidth]{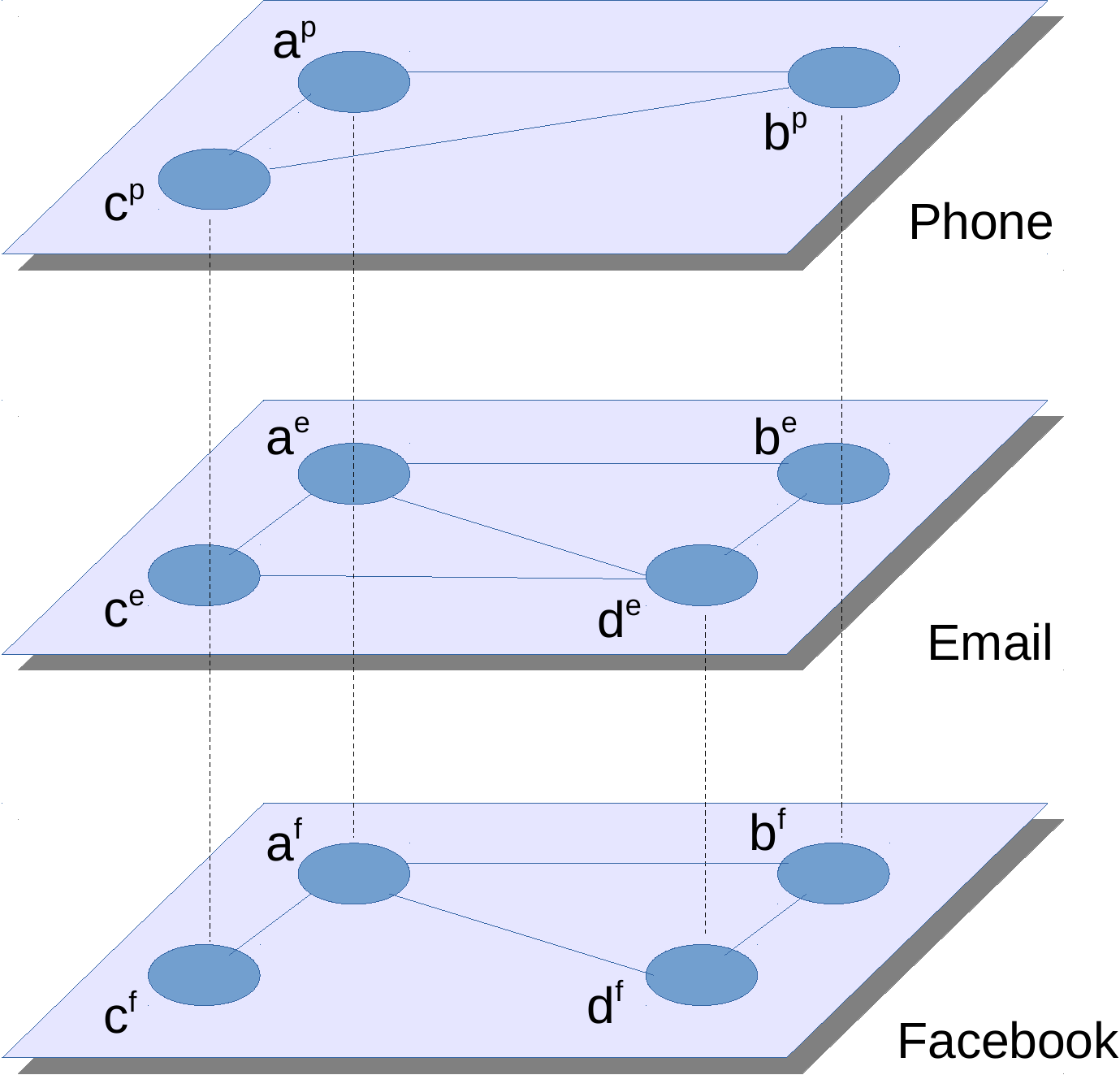}
  \end{subfigure}
\end{center}
\caption{A toy example of multi-layer social network in horizontal perspective}
\label{fig:example}
\end{figure} 

In this paper we take a dynamical view of the network structure. The simplest dynamical process on graphs $G$ is the discrete time \emph{unbiased random walk} (URW), represented by the transition matrix $\MM$. For their connections we have the following lemma:

\begin{lemma}
\label{the:mapping}
For every directed network $G =\\ (V,E,\AA)$, 
  there is a unique transition matrix, $\MM_{\AA} = \AA\DD_{\AA}^{-1} $,
  that captures the URW Markov process on $G$.
Conversely, given a transition matrix $\MM$, there is in fact 
  an infinite family of adjacency matrices whose random walk Markov process
  is  consistent with $\MM$: 
\[
\AAcal_{\MM} = \{\MM\GGamma : \mbox{$\GGamma$ is a positive diagonal matrix}.\}
\]
\end{lemma}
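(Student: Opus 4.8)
The plan is to treat the two directions separately, with essentially all the real work concentrated in verifying the converse.

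For the forward direction, the statement is almost definitional: the URW on $G$ moves from a vertex to its neighbors with probability proportional to the corresponding entries of $\AA$, so once we normalize each vertex's weights by its degree via $\DD_{\AA}^{-1}$, the transition probabilities are pinned down entry by entry. Uniqueness then follows because the normalization factor for each vertex is forced --- it must be exactly the degree recorded in $\DD_{\AA}$ in order for the outgoing probabilities to sum to one. The only hypothesis I would need to flag is that every vertex has positive out-degree, so that $\DD_{\AA}$ is invertible and $\MM_{\AA} = \AA\DD_{\AA}^{-1}$ is well defined.

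For the converse, I would prove the set equality $\AAcal_{\MM} = \{\MM\GGamma : \GGamma \text{ positive diagonal}\}$ by establishing the two inclusions. The key computation, which drives everything, is that for a positive diagonal $\GGamma$ the degree matrix of $\MM\GGamma$ is $\GGamma$ itself. Indeed, since $\MM$ is a stochastic matrix, scaling its $v$-th column by $\gamma_v$ makes that column sum to $\gamma_v$, so $\DD_{\MM\GGamma} = \GGamma$. Substituting into the forward map gives $\MM_{\MM\GGamma} = (\MM\GGamma)\DD_{\MM\GGamma}^{-1} = (\MM\GGamma)\GGamma^{-1} = \MM$, which shows $\{\MM\GGamma\} \subseteq \AAcal_{\MM}$. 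For the reverse inclusion, suppose $\AA$ is any adjacency matrix with $\MM_{\AA} = \MM$. Then $\AA\DD_{\AA}^{-1} = \MM$, and rearranging gives $\AA = \MM\DD_{\AA}$; taking $\GGamma = \DD_{\AA}$, which is positive diagonal precisely because $\AA$ has positive degrees, exhibits $\AA$ as a member of the claimed family.

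Finally I would observe that the family is genuinely infinite: $\GGamma$ ranges over the full positive orthant of diagonal matrices, and distinct $\GGamma$ yield distinct $\MM\GGamma$ because $\MM$, being stochastic, has no zero columns. I expect the main subtlety --- rather than any genuine obstacle --- to be bookkeeping around the normalization convention: one must be consistent about whether the degrees used in $\DD$ are row or column sums so that the identity $\DD_{\MM\GGamma} = \GGamma$ holds exactly, and one must use positivity of $\GGamma$ both to keep $\DD^{-1}$ well defined and to guarantee that $\MM\GGamma$ shares the support (hence the neighborhood and self-loop structure) of $\MM$.
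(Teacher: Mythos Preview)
Your proposal is correct and is precisely the argument the paper has in mind. The paper itself does not supply a formal proof of this lemma; it merely follows the statement with the remark that the $n$ degrees of freedom are ``vertex scaling factors because each random walk distribution remains the same as long as the whole column is multiplied together.'' Your two-inclusion argument, driven by the observation that $\DD_{\MM\GGamma}=\GGamma$ since $\MM$ is column-stochastic, is exactly a rigorous unpacking of that sentence, and your caveat about the row/column convention for $\DD$ is well placed given the paper's notation.
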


In other words, every 
  directed network uniquely defines a random walk
  process. However, given a transition matrix $\MM$,
   there remains $n$ degrees of freedom
   to specify the underlying network. Intuitively, they are vertex scaling factors bacause each random walk distribution remains the same as along as the whole column is multiplied together.
We will use this fact in some of our construction.


Recall that an $n$-dimensional probability 
 vector $\ppi$ is the {\em stationary distribution}
  of $\MM$ if $\MM\ppi = \ppi$.
The Markov process defined by $\MM$ 
  is {\em detailed-balanced}
  if for all $u,v\in V$, $\pi_u m_{uv} = \pi_v m_{vu}$.
It is well known that \cite{aldous2002reversible}:
\begin{lemma}
\label{the:mappingUn}
Suppose $\MM$ is a detailed-balanced 
  transition matrix with stationary $\ppi$.
Let $\PPi$ be the diagonal matrix defined by $\ppi$. 
Then, for all $\alpha > 0$, $\alpha\cdot \MM\PPi$ is symmetric.
Namely, $\AAcal_{\MM} = \{\alpha\cdot \MM\PPi\}$ contains symmetric adjacency matrices
  if and only if $\MM$ is detailed-balanced.
\end{lemma}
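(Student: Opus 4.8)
The plan is to collapse the matrix-level statement to a single entrywise identity and then treat the two implications separately. Write $\PPi=\mathrm{diag}(\pi_1,\dots,\pi_n)$, and more generally let $\GGamma=\mathrm{diag}(\gamma_1,\dots,\gamma_n)$ be any positive diagonal matrix. The key observation is that right-multiplication by a diagonal matrix merely rescales columns, so $(\MM\GGamma)_{uv}=m_{uv}\gamma_v$. Consequently $\MM\GGamma$ is symmetric if and only if $m_{uv}\gamma_v=m_{vu}\gamma_u$ for all $u,v$, i.e.\ if and only if the walk is detailed-balanced with respect to the measure $\gamma$. This identity is the whole engine of the lemma: it turns ``$\AAcal_{\MM}$ contains a symmetric matrix'' into ``the chain admits a balancing measure,'' and both directions drop out of it.

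For the \emph{if} direction, assume $\MM$ is detailed-balanced with stationary $\ppi$, so that $\pi_v\,m_{uv}=\pi_u\,m_{vu}$ for all $u,v$. Taking $\GGamma=\PPi$ in the identity above gives $(\MM\PPi)_{uv}=m_{uv}\pi_v=m_{vu}\pi_u=(\MM\PPi)_{vu}$, so $\MM\PPi$ is symmetric, and multiplication by any scalar $\alpha>0$ clearly preserves symmetry. Because the stationary distribution of the walk is strictly positive, $\PPi$ and hence $\alpha\PPi$ are genuine \emph{positive} diagonal matrices, so $\alpha\cdot\MM\PPi=\MM(\alpha\PPi)$ is a bona fide member of the family $\AAcal_{\MM}$ from Lemma~\ref{the:mapping}. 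This exhibits a symmetric adjacency matrix inside $\AAcal_{\MM}$, as claimed.

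For the \emph{only if} direction, suppose $\AAcal_{\MM}$ contains a symmetric matrix $\MM\GGamma$ with $\GGamma$ positive diagonal. By the identity, symmetry means $m_{uv}\gamma_v=m_{vu}\gamma_u$ for all $u,v$. Summing this relation over $u$ and using that each column of the transition matrix sums to one yields $\gamma_v=\sum_u m_{uv}\gamma_v=\sum_u m_{vu}\gamma_u=(\MM\gamma)_v$, so $\gamma$ is a strictly positive stationary vector of $\MM$. Invoking uniqueness of the stationary distribution forces $\gamma=c\,\ppi$ for some $c>0$; substituting back and cancelling $c$ recovers $\pi_v\,m_{uv}=\pi_u\,m_{vu}$, i.e.\ detailed balance. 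The same argument shows the symmetric members of $\AAcal_{\MM}$ are exactly the ray $\{\alpha\cdot\MM\PPi:\alpha>0\}$, since the balancing measure is unique up to scale.

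The main obstacle is precisely this converse step: the passage from an arbitrary balancing measure $\gamma$ to the stationary $\ppi$ relies on uniqueness of the stationary distribution, which in turn presupposes an irreducibility (connectivity) hypothesis on $\MM$, together with strict positivity of $\ppi$ so that $\PPi$ legitimately lies among the positive diagonal matrices defining $\AAcal_{\MM}$. Everything else is the one-line column-rescaling computation; the content lives in recognizing that a detailed-balance measure is automatically stationary and then appealing to its uniqueness.
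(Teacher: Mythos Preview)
Your argument is correct. The paper does not actually prove this lemma; it states it as ``well known'' with a citation to Aldous--Fill and moves on, so there is no in-paper proof to compare against. Your entrywise reduction $(\MM\GGamma)_{uv}=m_{uv}\gamma_v$ is exactly the standard route, and both implications follow from it as you wrote. One small remark: the converse step, where you sum $m_{uv}\gamma_v=m_{vu}\gamma_u$ over $u$ to conclude $\gamma$ is stationary and then invoke uniqueness, is more than the lemma strictly needs---once you have a positive $\gamma$ satisfying the balance identity, the chain is reversible by definition with respect to the normalization of $\gamma$, and that normalization \emph{is} the stationary distribution; you do not need irreducibility to call the chain detailed-balanced, only to pin $\gamma$ to a unique $\ppi$. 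You flag this yourself in the final paragraph, which is appropriate since the paper is silent on irreducibility.
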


Therefore, for undirected graphs, there is only one degree of freedom to specify the underlying network for a given detailed-balanced transition matrix, which can be interpreted as the global scaling factor.

\section{Layer Transformation}
\label{sec:layer}
In \cite{Ghosh2014KDD}, Ghosh et al. argued that perceived network structure is a result of the interplay between the network topology and the dynamical process on top of it. We believe this interplay is even more pronounced in multilayer networks, with each layer represents a different type of connection. It is essential to account for the different intra-layer dynamics before the composition.

Taking \figref{fig:example} for example, if we want to trace a message in the combined network, one should take into account the different propagating patterns in each layer. We might weigh the edges in the phone layer much heavier if it is a business message. Similarly, each user may also has its own habits in terms of how often they check their email and Facebook accounts.

In \secref{sec:preliminary}, we showed the mathematical mapping from the adjacency matrices to the transition matrices representing simple URWs. The parametrized Laplacian operator introduced in \cite{Ghosh2014KDD} can model a richer family of dynamical processes. As a conservative operator, it has a dominate eigenvalue $0$, and models continuous time random random walks or consensus processes with various biases and delays at vertices. In this paper, we shall focus on the random walk formulation:
\begin{equation}
\label{eq:parametrizedL}
\LL=(\DD'-B\AA )(\DD'\TT)^{-1}.
\end{equation}
where $\AA$ represents the adjacency matrix of the and $\DD'$ is the reweighed diagonal degree matrix \footnote{The general solution of the continuous time model is $\theta(t) = e^{-\LL t} \theta(0)$, although $P = \II-\LL$ can also be interpreted as the stochastic matrix of a discrete random walk.}.

Compared with traditional Laplacians, the parametrized Laplacian has two additional parameters: $\TT$ and $B$. The diagonal matrix $\TT$ controls the time delay factors, or local clock rate, at each vertex. In the toy example \figref{fig:example}, $\TT$ can be used to capture the checking frequency of email and Facebook accounts, or the limited attention a user has in face of information overload \cite{Hodas12limited}. It models user activities as Poison processes where the waiting time between logins are exponentially distributed with means specified by $\TT$ \cite{lambiotte_laplacian_2008}. Without loss of generality, we constrain all the entries in $\TT$ with $\tau_u \geq 1$. \footnote{Others have argued for more realistic models such as \cite{barabasi_origin_2005}, we stick to Poison processes for its mathematical simplicity.}

The bias factors form the other diagonal matrix $B$. It changes the trajectory by giving random walk targets different weights. Note that the degree matrix $\DD'$ is now defined as: $d'_u= \sum_v [B\AA] _{uv}$. In \figref{fig:example}, $B$ can be used to model different routing strategies in each layer. Such routing biases can based on structural properties like vertex degree or some external attributes specified by $B$. Entries $b_i$ of $B$ can be quite general, as long as the entries of $B\AA$ remains non-negative.

To reach homogeneity across the layers, we need to transform each input layer to equivalent graphs with URWs as the unifying dynamics. Using the two graph transformations under the parametrized Laplacian framework, we have

\begin{theorem}
\label{theorem:transform}
For a directed network $G =\\ (V,E,\AA)$, the dynamics $\LL=(\DD'-B\AA)(\DD'\TT)^{-1}$ is equivalent to a URW on another transformed graph.
\end{theorem}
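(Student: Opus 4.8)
The plan is to exhibit an explicit weighted graph $\widetilde{G} = (V,\widetilde{E},\widetilde{\AA})$ whose unbiased random walk has $\LL$ as its random-walk Laplacian. By Lemma~\ref{the:mapping}, the URW on any $\widetilde{\AA}$ has transition matrix $\MM_{\widetilde{\AA}} = \widetilde{\AA}\widetilde{\DD}^{-1}$, so its (random-walk) Laplacian is $\II - \MM_{\widetilde{\AA}} = (\widetilde{\DD} - \widetilde{\AA})\widetilde{\DD}^{-1}$, where $\widetilde{\DD}$ is the out-degree matrix of $\widetilde{\AA}$. Hence it suffices to construct a nonnegative $\widetilde{\AA}$ with $(\widetilde{\DD} - \widetilde{\AA})\widetilde{\DD}^{-1} = \LL$; the dynamics $e^{-\LL t}$ and the discrete walk $P = \II - \LL$ then coincide with the URW on $\widetilde{G}$. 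I would build $\widetilde{\AA}$ from two composable operations, one absorbing the bias $B$ and one absorbing the delay $\TT$.

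First, the bias transformation. Set $\AA_B := B\AA$, which is a legitimate nonnegative adjacency matrix by the standing assumption that $B\AA$ has nonnegative entries. By definition $d'_u = \sum_v [B\AA]_{uv}$, so $\DD'$ is exactly the out-degree matrix of $\AA_B$, and therefore $\DD' - B\AA$ is the combinatorial Laplacian of the reweighted graph $\AA_B$. This rewrites the operator as $\LL = (\DD' - \AA_B)(\DD'\TT)^{-1}$ and reduces the problem to explaining the remaining factor $\TT^{-1}$.

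Second, the delay transformation, which I would realize by adding self-loops. Define
\[
\widetilde{\AA} := B\AA + \DD'(\TT - \II).
\]
Since $\tau_u \geq 1$, the diagonal correction $\DD'(\TT - \II)$ is nonnegative, so $\widetilde{\AA} \geq 0$ is a valid adjacency matrix (self-loops are permitted by our single-layer convention). A one-line degree computation gives $\widetilde{d}_u = d'_u + d'_u(\tau_u - 1) = d'_u\tau_u$, i.e. $\widetilde{\DD} = \DD'\TT$, while the added self-loops cancel in the Laplacian: $\widetilde{\DD} - \widetilde{\AA} = \DD'\TT - B\AA - \DD'(\TT-\II) = \DD' - B\AA$. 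Substituting both identities,
\[
\II - \MM_{\widetilde{\AA}} = (\widetilde{\DD} - \widetilde{\AA})\widetilde{\DD}^{-1} = (\DD' - B\AA)(\DD'\TT)^{-1} = \LL,
\]
which is exactly the claimed equivalence.

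The computation itself is routine diagonal-matrix algebra; the real content, and the step I would flag as the crux, is the structural observation that the two parameters decouple into two independent graph edits: $B$ rescales the edge weights, and $\TT$ is realized purely as vertex self-loops of weight $d'_u(\tau_u-1)$, encoding the extra dwell time at $u$. The constraint $\tau_u \geq 1$ is what makes this honest, since it guarantees the self-loop weights are nonnegative so that $\widetilde{\AA}$ is a genuine weighted graph rather than a formal matrix; combined with $B\AA \geq 0$ it also ensures $P = \MM_{\widetilde{\AA}} = \widetilde{\AA}\widetilde{\DD}^{-1} \geq 0$ is a bona fide stochastic matrix in the discrete-time picture.
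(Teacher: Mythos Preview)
Your proof is correct and follows essentially the same two-stage approach as the paper: first absorb the bias by passing to $\AA' = B\AA$, then realize the delay $\TT$ as self-loops of weight $d'_u(\tau_u-1)$, arriving at the same transformed adjacency $\widetilde{\AA} = B\AA + (\TT-\II)\DD'$ (the paper calls it $\WW$). Your verification via $\widetilde{\DD} = \DD'\TT$ and $\widetilde{\DD} - \widetilde{\AA} = \DD' - B\AA$ is in fact a cleaner way to reach the identity than the paper's longer chain of algebraic manipulations, and your explicit invocation of $\tau_u \geq 1$ to guarantee nonnegativity is a point the paper leaves implicit.
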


\begin{proof}
The first transformation under the parametrized Laplacian framework is the bias transformation, which is already defined in the parentheses of Equation \eqref{eq:parametrizedL}. A biased random walk from vertex $u$ to $v$ with transition probability $P_{vu} \propto b_v a_{vu}$ is equivalent to an unbiased random walk on a reweighed adjacency matrix:
$\AA' = B\AA$, because
$$P'_{vu} = \dfrac{a'_{vu}}{\sum_{v}a'_{vu}} \propto b_v a_{vu} \propto P_{vu}\;.
$$
The second transformation is to view the delay factors $\TT$ as self-loops. Under the parametrized Laplacian framework, delays can be understood as rescaling the mean waiting time of random walks at vertices. They can be absorbed in to the scaled adjacency matrix $\WW$, which we call the interaction matrix. On top of a bias transformed random walk adjacency $\AA'$, we apply the delay factors:
\begin{align*}
 &(\DD'-\AA')\DD'^{-1} \TT^{-1}= \II\TT^{-1} - \AA'\DD'^{-1}\TT^{-1}\\
				    =& \II - (\II- \TT^{-1}) - \AA'\DD'^{-1}\TT^{-1}\\
				    =& \II - (\TT- \II + \AA'\DD'^{-1})\TT^{-1}\\
				    =& (\DD_{w} - (\TT- \II)\DD_{w}\TT^{-1} - \AA'\DD_{w}\DD'^{-1}\TT^{-1}) \DD_{w}^{-1}\II\\
				    =& (\DD_{w} - (\TT- \II)\DD' - \AA') \DD_{w}^{-1}\II\\
				    =& (\DD_{w}-\WW)\DD_{w}^{-1}\II\;,
\end{align*}
where the interaction matrix $\WW$ is the reweighed $\AA'$ plus the self loops represented by the diagonal matrix $(\TT- \II)\DD'$, with $\DD_{w}$ represents its diagonal degree matrix. Delay transformation allows us to rescale different $\TT$ to $\II$. A simple special case is when $\TT = \alpha \II$ is a scalar matrix. It can be understood as rescaling the global time of that layer.
\end{proof}
\noindent{\em Remark}:
Giving the dynamical parameters $B$ and $\TT$, we can always find the interaction matrix $\WW$ by
$$\WW = B\AA + (\TT- \II)\DD'\;.
$$
Although by Lemma~\ref{the:mapping}, it is not unique, the interaction matrix $\WW$ serves the purpose of unifying dynamics across the layers.

\begin{theorem}
\label{theorem:transformUN}
For an undirected network $G =\\ (V,E,\AA)$, the dynamics $\LL=(\DD'-B\AA B)(\DD'\TT)^{-1}$ is equivalent to a URW on another transformed graph.
\end{theorem}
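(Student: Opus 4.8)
The plan is to follow the same two-stage strategy as the proof of Theorem~\ref{theorem:transform} --- a bias transformation followed by a delay transformation --- modifying only the bias step so that symmetry survives throughout. The one structural change between the two theorems is that the bias now acts symmetrically, through $B\AA B$ rather than $B\AA$, so the whole point of the argument is to show that this symmetric placement (i) leaves the random-walk dynamics identical to the directed case and (ii) keeps the transformed graph undirected.

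First I would perform the bias transformation by setting $\AA' = B\AA B$ and letting $\DD'$ be its degree matrix. The key observation is that the induced transition probabilities are insensitive to the extra symmetric factor: since $[B\AA B]_{vu} = b_v\, a_{vu}\, b_u$, the source-vertex factor $b_u$ is common to the entire column $u$ and cancels in the normalization,
\[
P'_{vu} = \frac{b_v a_{vu} b_u}{\sum_v b_v a_{vu} b_u} = \frac{b_v a_{vu}}{\sum_v b_v a_{vu}} \propto b_v a_{vu},
\]
which is exactly the biased walk that appeared in the directed proof. Hence the symmetric bias reproduces the intended dynamics. At the same time, because $B$ is diagonal and $\AA$ is symmetric, $\AA' = B\AA B$ is symmetric, so by Lemma~\ref{the:mappingUn} the transformed graph is genuinely undirected. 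This is precisely the guarantee the asymmetric form $B\AA$ could not provide, and it is the reason for the symmetric placement here.

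Next I would apply the delay transformation verbatim from Theorem~\ref{theorem:transform}. The algebraic chain there manipulates $\AA'$ and its degree matrix $\DD'$ purely as formal symbols, so the identical sequence of identities yields
\[
(\DD'-\AA')(\DD')^{-1}\TT^{-1} = (\DD_{w}-\WW)\DD_{w}^{-1}\II,
\]
with interaction matrix $\WW = B\AA B + (\TT-\II)\DD'$ and $\DD_{w}$ its degree matrix. The self-loop term $(\TT-\II)\DD'$ is a non-negative diagonal matrix (using $\tau_u \ge 1$ and $d'_u \ge 0$), hence a legitimate self-loop weighting; being diagonal it is symmetric, so $\WW$ remains symmetric. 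This exhibits $\LL = (\DD_{w}-\WW)\DD_{w}^{-1}$ as the URW generator of an undirected interaction graph, completing the argument.

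Rather than a genuine obstacle, the thing to watch is symmetry bookkeeping: I must check that redefining $\DD'$ with respect to $B\AA B$ (instead of $B\AA$) does not disturb the delay-step identities, and that every object introduced --- the reweighted adjacency and the self-loops --- is symmetric, so that the composed graph lands in the symmetric case of Lemma~\ref{the:mappingUn}. Once symmetry is tracked through both stages, the statement is essentially a symmetric mirror of the directed theorem.
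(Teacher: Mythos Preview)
Your proposal is correct and follows essentially the same two-step strategy as the paper: the paper's proof simply states that the symmetric bias $a'_{ij}=b_i a_{ij} b_j$ yields an unbiased walk on a detailed-balanced reweighted graph (citing \cite{lambiotte_flow_2011,Ghosh2014KDD}), and then defers the delay step verbatim to Theorem~\ref{theorem:transform}. Your version is in fact more explicit than the paper's, since you write out the column-cancellation of $b_u$ and track symmetry through the self-loop term, whereas the paper leaves these as one-line remarks.
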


\begin{proof}
The bias transform in the undirected case is discussed in \cite{lambiotte_flow_2011, Ghosh2014KDD}. A biased random walk with transition probability $P_{ij} \propto b_i a_{ij}$ is equivalent to an unbiased random walk on a reweighed adjacency matrix:
$a'_{ij} = b_i a_{ij} b_j \;.$ Notice that the matrix products on both sises ensure that the resulting random walk is detailed balanced. The delay transformation remains the same as in Theorem~\ref{theorem:transform}.
\end{proof}

By Lemma~\ref{the:mappingUn}, the layer transformations of an undirected graphs produce a unique interaction matrix
\begin{corollary}
For an undirected network $G =\\ (V,E,\AA)$, and the dynamical parameters $B, \TT$, the interaction matrix
$$\WW = B\AA B + (\TT- \II)\DD'\;.
$$
is unique up to a global scaling factor.
\end{corollary}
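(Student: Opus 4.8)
The plan is to combine Theorem~\ref{theorem:transformUN} with the two structural lemmas from the preliminaries. First I would note that the two-step transformation of Theorem~\ref{theorem:transformUN} produces exactly the interaction matrix $\WW = B\AA B + (\TT - \II)\DD'$, and that this $\WW$ is the adjacency matrix of a URW whose transition matrix is $\MM_{\WW} = \II - \LL = \WW \DD_w^{-1}$. Since the dynamics $\LL$ is fixed once $\AA$, $B$, and $\TT$ are given, the transition matrix $\MM_{\WW}$ is completely determined; the only freedom lies in which adjacency matrix we attach to this fixed random-walk process. This is precisely the situation described by the mapping lemmas.

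Next I would invoke Lemma~\ref{the:mapping}: the full family of adjacency matrices consistent with $\MM_{\WW}$ is $\AAcal_{\MM_{\WW}} = \{\MM_{\WW}\GGamma\}$ over positive diagonal $\GGamma$, carrying $n$ vertex-scaling degrees of freedom. The essential reduction is to show that for an undirected input these degrees of freedom collapse to one. For this I would verify that $\WW$ is symmetric: $B\AA B$ is symmetric because $\AA$ is symmetric and $B$ is diagonal, so $(B\AA B)_{ij} = b_i a_{ij} b_j = b_j a_{ji} b_i = (B\AA B)_{ji}$, while $(\TT - \II)\DD'$ is diagonal and hence symmetric; their sum $\WW$ is therefore symmetric. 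Consequently the URW defined by $\MM_{\WW}$ is detailed-balanced.

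With detailed balance established, I would apply Lemma~\ref{the:mappingUn}: the symmetric members of $\AAcal_{\MM_{\WW}}$ form the one-parameter family $\{\alpha \cdot \MM_{\WW}\PPi : \alpha > 0\}$, where $\PPi$ is the diagonal matrix built from the stationary distribution of $\MM_{\WW}$. Since an interaction matrix for an undirected graph must itself be symmetric, $\WW$ is forced to lie in this family, leaving only the single scalar $\alpha$ free. This gives uniqueness up to a global scaling factor, as claimed.

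The step I expect to require the most care is the reduction from the $n$-parameter family of Lemma~\ref{the:mapping} to the one-parameter family of Lemma~\ref{the:mappingUn}: one must confirm that the self-loop contribution $(\TT - \II)\DD'$ introduced by the delay transformation does not destroy the symmetry (and hence detailed balance) that the bias transformation $B\AA B$ establishes. Because this contribution is purely diagonal, it only rescales the effective waiting times at vertices and leaves the off-diagonal symmetry intact, so detailed balance is preserved and Lemma~\ref{the:mappingUn} applies without modification.
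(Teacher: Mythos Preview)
Your proposal is correct and follows the same route as the paper: the corollary is stated immediately after the sentence ``By Lemma~\ref{the:mappingUn}, the layer transformations of an undirected graph produce a unique interaction matrix,'' so the paper's entire argument is that single invocation of Lemma~\ref{the:mappingUn}. You have simply made explicit the two things the paper leaves implicit --- that $\WW = B\AA B + (\TT-\II)\DD'$ is symmetric (hence $\MM_{\WW}$ is detailed-balanced) and that the symmetric members of $\AAcal_{\MM_{\WW}}$ therefore collapse from the $n$-parameter family of Lemma~\ref{the:mapping} to the one-parameter family $\{\alpha\,\MM_{\WW}\PPi\}$ of Lemma~\ref{the:mappingUn}.
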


With the transformed layers $\WW^1, \WW^2, ...\WW^l$ now all underly the same simple URW, we are ready to discuss the second stage of the framework in \figref{fig:framework}: how to combine them into a joint structure.

\section{Inter-layer composition}
\label{sec:inter}

{\em Inter-layer composition}
is the problem of constructing inter-layer edges that connects the transformed layers into a coherent structure. We base our composition on the observed inter-layer dynamics. Depending on the data source and problem of interest, the way layers interact with each other differs. We will consider several commonly seen situations in social and technological networks.

\subsection{Multiplex composition}
We start with the simplest case of all, when inter-layer structures are absent. In this case, simple matrix addition does the trick, and we have the following algorithm

\begin{algorithm}
\caption{Multiplex network composition}
\label{alg:multiplex}
\noindent {\textbf{Input}:} weighted network layers: $G^1= (V, E^1,\AA^1), G^2= (V, E^2, \AA^2), ..., G^l= (V, E^l, \AA^l)$, parameters of the dynamics: $\TT^1,B^1,\TT^2,B^2,..., \TT^l, B^l$,\\
{\textbf{Algorithm}}
\begin{itemize}
\item Apply the reweighing transformation $\AA'^i = B^i\AA^i$ ($\AA'^i = B^i\AA^i B^i$ for undirected graphs)
\item Apply the scaling transformation $\WW^i = \AA'^i + (\TT^i- \II)\DD'^i$
\end{itemize}
{\textbf{Output}} the $n\times n$ adjacency matrix $\WW = \sum_{i=1}^l \WW^i$
\end{algorithm}

\subsection{Multi-layer composition}
While we recommend \algref{alg:multiplex} for purely multiplex networks, we need a more general framework when inter-layer structures do matter. Consider the following mathematical problem:

\begin{formulation}[Super-adjacency Composition]\label{prob:basic}
Given $l$ transformed layers 
   $G^1= (V, E^1,\WW^1), ..., G^l= (V, E^l, \WW^l)$, and
egocentric inter-layer dynamics $(\MM_v : v\in V)$,
compose a  $(ln\times ln)$  weighted super-adjacency matrix, 
 where $n = |V|$,
$$\mathbb{W} =  \begin{bmatrix}
                 \WW^1 &\WW^{12} &... &\WW^{1l}\\
                 \WW^{21} &\WW^{2} &... &\WW^{2l}\\
                 ...\\
                 \WW^{l1} &\WW^{l2} &... &\WW^{l}
                \end{bmatrix}
$$
to integrate the multi-layer network data.
In addition, we require all off-diagonal blocks of $\WWW$ are diagoal matrices.
In other words,  $\WWW$ represent a diagonal multi-layer networks,
  as defined in \cite{kivela_multilayer_2013}, which
  means that all inter-layer edges are
   between the same vertex at different layers.
\end{formulation}

\noindent{\em Remark}: Here, in $\mathbb{W}$,
  the $l$ diagonal ($n\times n$)-blocks are directly fed from 
  the first stage $\WW^1, \WW^2, ..., \WW^l$.
  
We have used the model of egocentric inter-layer dynamics for each vertex $(\MM_v : v\in V)$, with $\MM_v$ being the stochastic transition matrix for the inter-layer instances of the same vertex $v$. Such egocentric models are considered to be fundamental in the formation of social structures\cite{dunning_egocentric_1992,boyd_social_2007}, and might be readily available from existing social studies. They are also easy to crawl in social networks that provide cross-platform interfaces.

Together with the traditional horizontal perspective in \figref{fig:example}, egocentric inter-layer dynamics form a vertical perspective of the same joint system, where a unified dynamical process unfolds. For illustration, consider our toy example of \figref{fig:joint}.
Suppose when Alice receives a message 
  from a phone call, she might pass on the
  message directly by calling with probability $0.6 = 0.4+0.2$,
  or relay the message through emails with probability $0.3$,
  or post it on a Facebook wall with probability $0.1$.


\begin{figure}
\begin{center}
  \begin{subfigure}[b]{0.45\textwidth}
    \includegraphics[width=\textwidth]{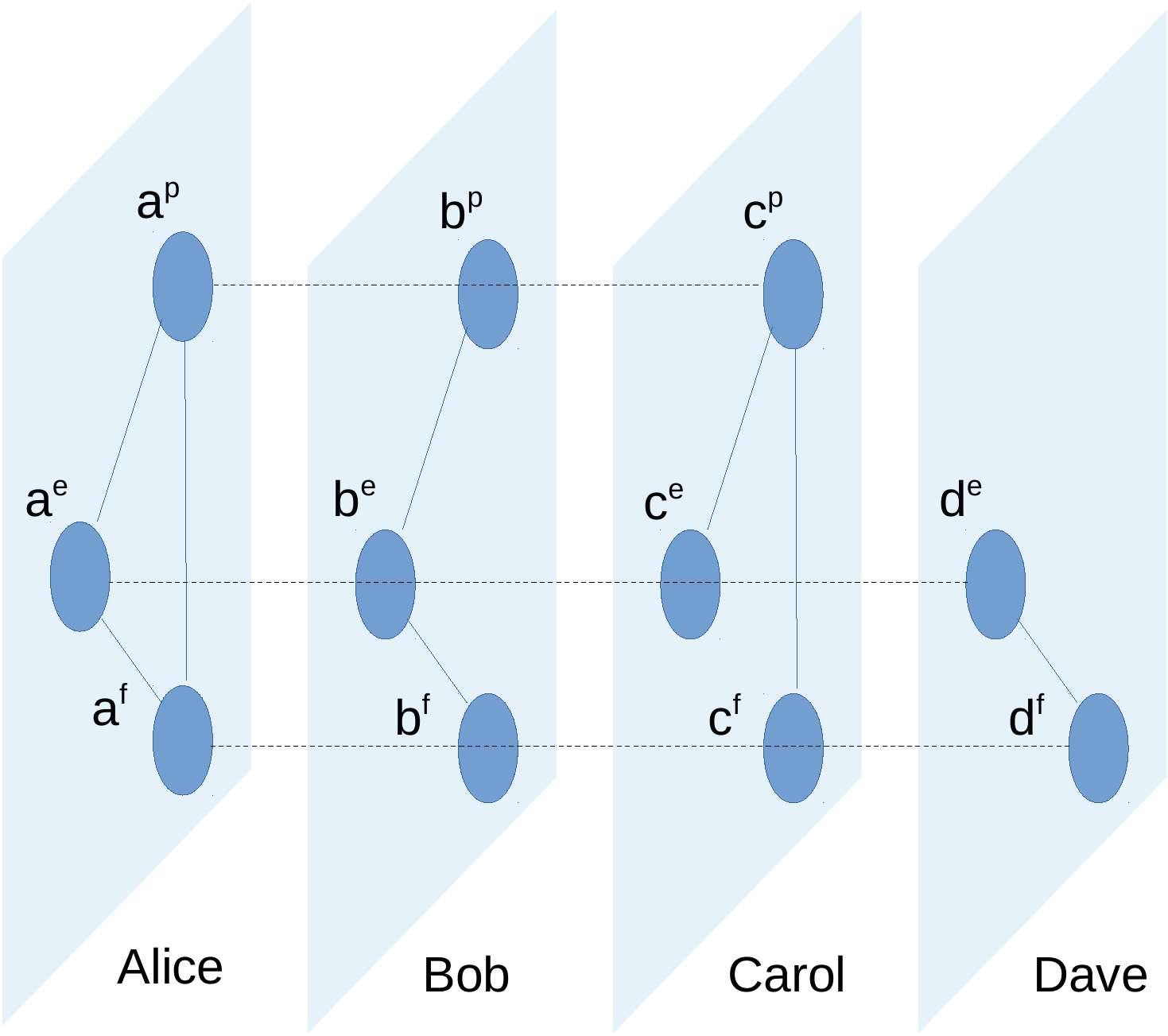}
    \caption{The toy example in vertical perspective}
  \end{subfigure}
  \begin{subfigure}[b]{0.48\textwidth}
    \includegraphics[width=\textwidth]{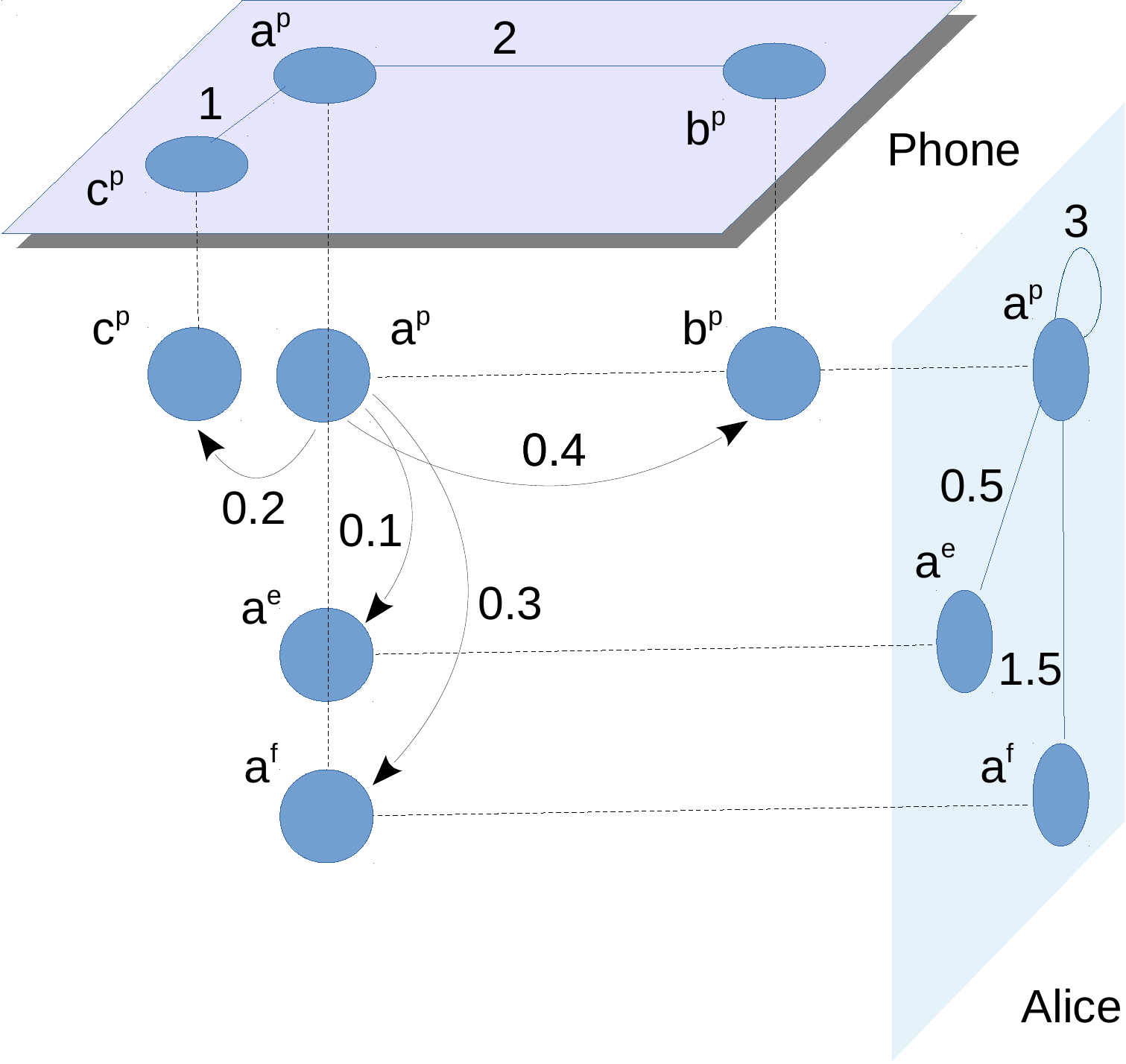}
    \caption{The Markov random walk from vertex $a^p$}
  \end{subfigure}
\end{center}
\caption{Inter-layer dynamics of the toy example}
\label{fig:joint}
\end{figure}

Our plan is to first formulate the network composition problem in the complete-information setting. The mathematical characterizations of this ideal setting can then be used to find feasible solution spaces when only partial-information is available. The additional degrees of freedom will also allow us to optimize the design space of inter-layer edges, or predict missing links. In this section, we will first show that a
  ``Dynamic view of network composition'' leads a
    feasible and unique formulation of $\mathbb{W}$.

\begin{quote}
{\em Which composed super-adjacency matrix $\mathbb{W}$
  properly integrates the multi-layer network data with 
   egocentric inter-layer dynamics?}
\end{quote}

To answer this questions,
recall that the input is specified by
  $l$ transformed adjacencies $\WW^1,...,\WW^l$, and 
   $n$ egocentric Markov models $(\MM_v: v\in V)$.
In the ``dynamical view'', by Lemma~\ref{the:mapping}
   each layer $\WW^i$ also uniquely defines a Markov model, $\MM_{\WW^i}$.
Together, these $l+n$ Markov models define a joint Markov model, whose
  ajacency structure is the desired super-composition.
Thus, we aim to identify a weighted $(ln\times ln)$-adjacency matrix
  $\mathbb{W}$, whose random-walk Markov model, $\MM_{\WWW}$,
  satisfies the following two basic conditions:
\begin{enumerate}
\item  {\bf Layer Consistency}: The random-walk Markov model
  of each layer,  $\MM_{\AA^i}$, $i\in [1,2,..,l]$,  is the {\em projection}
   of $\MM_{\WWW}$ to that layer, and 
\item {\bf Ego Consistency}: 
     The egocentric inter-layer dynamics, $\MM_v$, of vertex $v\in V$, 
     is the {\em layer marginals} of $\MM_{\WWW}$ at vertex $v$.
\end {enumerate}

Recall that the projection of a Markov model $\MM$ onto a subset 
  is simply  the stochastic normalization of corresponding 
  principal submatrix of $\MM$.
Thus, Condition 1 is automatically achieved
 by setting diagonal blocks of $\WWW$ as $\WW^1,...,\WW^l$ in Formulation \ref{prob:basic}.

Condition 2 addresses egocentric inter-layer dynamics.
Notice that for each $v\in V$,  $\WWW$ defines an $l\times l$ interlayer
  adjacency  matrix $\WW_v$. 
The random-walk process, $\MM_{\WW_{v}}$,
  is the projection of the joint Markov process $\MM_{\WWW}$
  to the vertical slice consists of instances of $v$ in different layers.
Condition 2 then requires that $\MM_{\WW_{v}}$ should be consistent with
   $v$'s egocentric inter-layer dynamics $\MM_v$.
   
To be more specific, let $q_{v,i}$ denote the 
  transition probability according to $\MM_{\WWW}$
  for going from vertex $v$ in the $i^{th}$ layer to 
  some $u$ in the same layer.
Let $\QQ_v$ be the $l\times l$ diagonal matrix
  of $[q_{v,i}: i\in [l]]$.
Then, $\QQ_v + \MM_{\WW_{v}} \cdot (\II - \QQ_v)$
  denote the 
  layer marginals of the joint Markov model $\MM_{\WWW}$
  at  vertex $v$.
Consequently, 
 Condition 2 requires that layer marginals
 $\MM_v = \QQ_v + \MM_{\WW_{v}} \cdot (\II - \QQ_v)$. Intuitively, egocentric inter-layer dynamics $\MM_v$ bridges between the orthogonal projections by including $\QQ_v$ as well as $\MM_{\WW_{v}}$.

Now we ready to present the main theorem of this paper:
\begin{theorem}
For any multi-layer data $(\AA_i: i\in [l], \MM_{v}: v \in V)$,
 there exists a unique and feasible super-composition $\WWW$ 
 that satisfies both Layer Consistency and Ego Consistency.
\label{th:directedFull}
\end{theorem}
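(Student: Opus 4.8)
The plan is to collapse the $(ln\times ln)$ composition into $n$ independent $l\times l$ reconstruction problems, one per vertex, and to solve each as a small system admitting a unique nonnegative solution. First I would record that Layer Consistency is automatic: because the diagonal blocks of $\WWW$ are fixed to $\WW^1,\dots,\WW^l$, the principal submatrix of $\MM_{\WWW}$ restricted to layer $i$ agrees with the within-layer part of $\WW^i$ up to a per-column rescaling by the total out-weight of each instance $v^i$, and that scaling cancels under the stochastic normalization defining the projection. Hence the only free quantities are the inter-layer edge weights, and the only genuine condition to enforce is Ego Consistency.

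Next I would exploit the requirement that the off-diagonal blocks be diagonal matrices: the only inter-layer edges incident to the vertical slice $\{v^1,\dots,v^l\}$ are those among the copies of $v$ itself, so the egocentric constraints decouple entirely across vertices. It therefore suffices, for each fixed $v$, to solve for the $l(l-1)$ weights $w^{ij}_v$ (the edge linking $v^i$ to $v^j$). Writing $s^i_v$ for the fixed within-layer out-weight of $v^i$ inherited from $\WW^i$ and $I^i_v=\sum_{j\neq i}w^{ij}_v$ for the to-be-determined total inter-layer out-weight, the stay-in-layer probability is $q_{v,i}=s^i_v/(s^i_v+I^i_v)$ and the conditional cross-layer distribution is $w^{ij}_v/I^i_v$. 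Matching the layer marginal $\QQ_v+\MM_{\WW_v}(\II-\QQ_v)$ to the prescribed $\MM_v$ forces $q_{v,i}=(\MM_v)_{ii}$, which pins $I^i_v=s^i_v\,(1-(\MM_v)_{ii})/(\MM_v)_{ii}$, and the off-diagonal entries then yield the closed form
$$w^{ij}_v = s^i_v\,\frac{(\MM_v)_{ji}}{(\MM_v)_{ii}}, \qquad j\neq i.$$

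I would read existence, feasibility, and uniqueness straight off this formula. Existence and feasibility: every $w^{ij}_v$ is nonnegative because $\MM_v$ is stochastic with $(\MM_v)_{ii}>0$ and $s^i_v\ge 0$, and placing these weights on the diagonals of the off-diagonal blocks gives an admissible diagonal super-adjacency. Uniqueness is the point I would emphasize: although Lemma~\ref{the:mapping} says a Markov model fixes its adjacency only up to a positive per-column scaling, that freedom is killed here because the within-layer out-weights $s^i_v$ are already fixed by the diagonal blocks. Anchoring to $s^i_v$ turns the $l-1$ ratio constraints in column $i$ of $\MM_v$ into $l-1$ fully determined unknowns, leaving no residual degree of freedom.

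The step I expect to be the main obstacle is verifying the layer-marginal identity $\MM_v=\QQ_v+\MM_{\WW_v}(\II-\QQ_v)$ cleanly, since varying the inter-layer weights simultaneously rescales the stay probability $q_{v,i}$ (through the out-degree normalization of $v^i$) and reshapes the conditional distribution $\MM_{\WW_v}$. One must check that these two coupled effects compose to hit exactly the prescribed $\MM_v$, i.e.\ that the map $(w^{ij}_v)\mapsto\MM_v$ is a bijection on the feasible region; the cancellation producing the closed form above is precisely what makes both solvability and uniqueness fall out. I would also flag the mild nondegeneracy hypotheses implicitly required for feasibility, namely $(\MM_v)_{ii}>0$ (each vertex remains in its layer with positive probability) and $s^i_v>0$ (no vertex is isolated within a layer), absent which $q_{v,i}$ and the reconstruction are ill-defined.
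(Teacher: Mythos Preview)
Your proposal is correct and follows essentially the same route as the paper: decouple into $n$ independent per-vertex problems, anchor the column scaling of Lemma~\ref{the:mapping} by the fixed within-layer out-weights $s^i_v=d^i_u(\text{out})$, and read off the closed form $w^{ij}_v=s^i_v\,(\MM_v)_{ji}/(\MM_v)_{ii}$, which is exactly the paper's Algorithm~\ref{alg:interEdges}. Your uniqueness argument (the anchoring kills the residual scaling freedom) is a bit tighter than the paper's dimension count, and your explicit nondegeneracy caveats $(\MM_v)_{ii}>0$, $s^i_v>0$ match what the paper glosses as ``provided that $\MM_u$ entries are well defined.''
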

\begin{proof}
Because Formulation \ref{prob:basic}
  requires that all off-diagonal blocks of $\WWW$ are diagonal matrices,
  we have $(l^2-l)n$ degrees of freedom  
  after meeting Condition 1.
  
Notice that $(\MM_v: v\in V)$ are $n$ stochastic $l\times l$ matrices.
Thus, Condition 2 represents $(l^2-l)n$ dimensional constraints, which matches perfectly with the remaining degrees of freedom. Uniqueness proven.

To prove the feasibility of the unique solution, we introduce the algorithmic framework \algref{alg:superAdj}, 
\begin{algorithm}[H]
\caption{Multilayer network composition}
\label{alg:superAdj}
\noindent {\textbf{Input}:} weighted network layers: $G^1= (V, E^1,\AA^1), G^2= (V, E^2, \AA^2), ..., G^l= (V, E^l, \AA^l)$, parameters of the dynamics: $\TT^1,B^1,\TT^2,B^2,..., \TT^l, B^l$,  and $n$ $l\times l$ egocentric inter-layer Markovian matrix $M_u$ for each vertex $u\in V$.\\
{\textbf{Algorithm}}
\begin{itemize}
\item Apply the reweighing transformation $\AA'^i = B^i\AA^i$ ($\AA'^i = B^i\AA^i B^i$ for undirected graphs)
\item Apply the scaling transformation $\WW^i = \AA'^i + (\TT^i- \II)\DD'^i$
\item Create a $ln\times ln$ empty matrix $\mathbb{W}$
\item Fill the $l$ diagonal blocks (each of size $n\times n$) with $\WW^1, \WW^2, ..., \WW^l$
\item Construct the off diagonal blocks $\WW^{ij}$ (each of size $n\times n$) for all layer pairs $i$ and $j$ based on \algref{alg:interEdges} with $\WW^1, \WW^2, ..., \WW^l$ as inputs
\end{itemize}
{\textbf{Output}} The super adjacency matrix
$$\mathbb{W} =  \begin{bmatrix}
                 \WW^1 &\WW^{12} &... &\WW^{1l}\\
                 \WW^{21} &\WW^{2} &... &\WW^{2l}\\
                 ...\\
                 \WW^{l1} &\WW^{l2} &... &\WW^{l}
                \end{bmatrix}
$$
\end{algorithm}

We need a subroutine \algref{alg:interEdges} to satisfy inter-layer constraints at each node.
  we rearrange the row and column of $\mathbb{W}$
   so that the counterparts of the same 
  vertex are grouped together.
The rearrangement express $\WWW$ with the following block structures:
\[\bar{\mathbb{W}} = \begin{bmatrix}
    \WW_1 	&\WW_{12}	&...	&\WW_{1n}\\
    \WW_{21} 	&\WW_2 		&... 	&\WW_{2n}\\
    ...\\
    \WW_{n1} 	&\WW_{n2}	&... 	&\WW_n,
  \end{bmatrix}
\]
where $\WW_{u,v}$ are $l\times l$ matrices that have already been fixed
  by Condition 1.
The $n$ matrices, $\WW_v: v\in V$ on the diagonal blocks,
  contains all entries that we will need to set using Condition 2.
Because the rearrangement of $\WWW$ preserves the diagonal
  entries up to reordering, the diagonal entries of $\WW_v$, $v\in V$,
  are also set by Condition 1.
The rest $n(l^2-l)$ entries lead to the same degrees of freedom we discussed earlier.




The reordered $\WW_{u}$ blocks are closely related to the egocentric
adjacencies $\XX_u$ underlying the egocentric inter-layer dynamics
$\MM_u$. The vertical slice in \figref{fig:joint} demonstrates such a
$\XX_a$, where intra-layer transitions are captured using
self-loops. Subroutine \algref{alg:interEdges} can now be specified as 

\begin{algorithm}[H]
\caption{Building inter-layer blocks}
\label{alg:interEdges}
\noindent {\textbf{Input}:} transformed layers: $G_1= (V, E^1, \WW^1), G_2= (V, E^2, \WW^2), ..., G_l= (V, E^l, \WW^l)$,  and a $l\times l$ egocentric inter-layer transition matrix $\MM_u$ for vertex $u\in V$.\\
{\textbf{Algorithm}}
\begin{itemize}
\item Create a $l\times l$ empty matrix $\XX_u$
\item Fill the diagonal elements with $\XX^{ii}_u = d^{i}_u(out)$
\item Construct the off diagonal elements
$$\XX^{ij}_u = \frac{\MM_u^{ij}}{\MM_u^{ii}}d^{i}_u(out)$$
\end{itemize}
{\textbf{Output}} Block $\XX_u$ and repeat for each $u\in V$\\
\end{algorithm}

Using Lemma \ref{the:mapping}, we can rewrite the steps in \algref{alg:interEdges} as $\XX_u = \MM_u \GGamma$, by setting the $i^{th}$ entry of $\GGamma$ uniquely as $d^{i}_u(out)/\MM_u^{ii}$. Intuitively, we are simply respecting the layer inputs and using the intra-layer dynamics to determine the vertex scaling factor.

From \figref{fig:joint}, it is clear that the off-diagonal parts of $\XX_u$ is exactly what we are looking for in $\WW_u$ blocks. Or $\WW_u = \XX_u - D_u(out)$, where the diagonal matrix $D_u(out)$ is composed of $d^{i}_u(out)$ entries. With the uniquely solvable $\XX_u$ blocks, we can now complete the output $\mathbb{W}$ by filling its off diagonal blocks $\WW^{ij}$ with reordered $\WW_u$ blocks. On top of that, \algref{alg:interEdges} will always lead to feasible solutions with the constrains $\XX_u^{ij}\geq 0$, provided that $\MM_u$ entries are well defined.
Uniqueness and feasibility proven.
\end{proof}
\subsection{Overdetermined Composition}
\label{sec:interEdges}
Based on Theorem \ref{th:directedFull}, we have a fully determined system for consistent network composition when we have complete information about the personalized inter-layer dynamics. In practice, depending on the inputs and constrains, we might have underdetermined, overdetermined or even mixed systems.

If the network is undirected, the Markov process is under the detailed balance condition. They also become reversible, leading to a additional dependency for each independent loop in $\XX_u$ because of Kolmogorov's criterion. To count the number of independent loops, we simply subtract the number of edges in a connected tree ($l-1$) from the total number of edges, as each additional edge on the tree will introduce an independent loop. In our case, we consider all possible inter-layer connections (a complete graph $\XX_u$). Then the undirected Markov matrix would lead to
$$l(l-1) - [\frac{l(l-1)}{2} - (l-1)] = \frac{l(l+1)}{2} -1
$$
constrains. The degree of freedom of an undirected marginalized adjacency $X_u$ is $\frac{l(l-1)}{2}$. There are $l-1$ more constrains than variables. Leading to:
\begin{conjecture}In undirected graphs, the existence of a Layer Consistent and Ego Consistent solution to Formulation \ref{prob:basic} depends on the inputs.\\
\label{th:undirectedFull}
\end{conjecture}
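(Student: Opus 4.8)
The plan is to upgrade Conjecture~\ref{th:undirectedFull} from the qualitative claim ``existence depends on the inputs'' to a sharp if-and-only-if characterization, from which the conjecture follows immediately. Following the reordering argument in the proof of Theorem~\ref{th:directedFull}, I would first reduce the global symmetry requirement on $\WWW$ (undirectedness) to a collection of independent per-vertex conditions on the $l\times l$ egocentric blocks $\XX_u$, since the off-diagonal blocks of $\bar{\WWW}$ are fixed by Layer Consistency and the only free entries live in the diagonal blocks $\WW_u$. Undirectedness of the composed network is equivalent to $\XX_u = \XX_u^{\top}$ for every $u\in V$, so the whole problem decouples across vertices and I only need to decide, for each fixed reversible input $\MM_u$ and fixed layer out-degrees $\{d^i_u(\mathrm{out})\}$, whether a symmetric feasible $\XX_u$ exists.

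The key computation isolates the binding constraints. By Lemma~\ref{the:mapping}, any adjacency realizing the transition matrix $\MM_u$ has the form $\XX_u=\MM_u\GGamma$ for a positive diagonal $\GGamma$, and prescribing the diagonal $\XX_u^{ii}=d^i_u(\mathrm{out})$ pins $\gamma_i=d^i_u(\mathrm{out})/\MM_u^{ii}$ exactly as in Algorithm~\ref{alg:interEdges}; hence $\GGamma$ carries the only remaining freedom and it is already determined by the inputs. Imposing symmetry $\gamma_i\MM_u^{ij}=\gamma_j\MM_u^{ji}$ and substituting the detailed-balance identity $\pi_i\MM_u^{ij}=\pi_j\MM_u^{ji}$ (valid because $\MM_u$ is reversible, with $\ppi^{(u)}$ its stationary distribution) collapses to $\gamma_i/\pi_i^{(u)}=\gamma_j/\pi_j^{(u)}$, i.e. $\gamma_i=\lambda_u\,\pi_i^{(u)}$ for a single constant $\lambda_u$. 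Unwinding $\gamma_i$ yields the clean solvability criterion
\[
d^i_u(\mathrm{out}) \;=\; \lambda_u\,\pi_i^{(u)}\,\MM_u^{ii}\quad\text{for all } i,\ u,
\]
which is $l-1$ independent ratio conditions per vertex, matching exactly the $l-1$ surplus predicted by the dimension count. When these hold, $\GGamma$ and therefore $\WWW$ are unique and feasible (positivity is automatic since $\lambda_u,\pi_i^{(u)}>0$); when they fail at any vertex, no symmetric realization exists.

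To conclude that existence genuinely depends on the inputs I would exhibit both regimes explicitly. For solvable inputs, start from an arbitrary symmetric egocentric adjacency $\XX_u$, read off its URW $\MM_u$ and its column sums $d^i_u(\mathrm{out})$; by construction these satisfy the criterion and $\XX_u$ itself is the (unique) solution, so the solvable locus is nonempty. For unsolvable inputs, fix any such reversible $\MM_u$ but replace one transformed layer $\WW^i$ by a rescaled copy so that a single degree $d^i_u(\mathrm{out})$ is multiplied by a factor $\neq 1$; this breaks one ratio $d^i_u(\mathrm{out})/(\pi_i^{(u)}\MM_u^{ii})=\lambda_u$ while keeping $\MM_u$ a legitimate reversible input, so no symmetric $\XX_u$ can simultaneously realize $\MM_u$ and the prescribed diagonal, and Formulation~\ref{prob:basic} has no undirected solution. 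Together these give the conjecture, and in fact the stronger iff statement.

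The main obstacle I anticipate is not the computation above but justifying that the $l-1$ constraints are the \emph{only} obstruction and are genuinely binding. Concretely, I must rule out that some feasible $\XX_u$ \emph{not} of the tidy form $\MM_u\GGamma$ could rescue solvability; this rests entirely on the completeness of Lemma~\ref{the:mapping}, namely that $\AAcal_{\MM}=\{\MM\GGamma\}$ exhausts all realizations, so that pinning the diagonal leaves no escape route. A secondary subtlety is the bookkeeping of the Kolmogorov cycle conditions: since $\MM_u$ is \emph{assumed} reversible, the $\tfrac{(l-1)(l-2)}{2}$ cycle identities are properties of the input rather than extra restrictions on $\XX_u$, and I would need to verify carefully that after removing this redundancy exactly $l-1$ independent constraints on the layer degrees survive, so that the count is neither optimistic nor pessimistic and the iff is exact.
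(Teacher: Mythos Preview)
Your proposal is correct and in fact sharper than the paper's own treatment. The paper supports this statement only by (i) a dimension count---using Kolmogorov's cycle criterion to tally the independent loop constraints imposed by a reversible $\MM_u$ and arriving at $l-1$ excess constraints over unknowns---and (ii) a single worked numerical instance on the toy network, exhibiting an explicit compatibility equation on the inputs that can fail. No if-and-only-if characterization is claimed there; the counting shows the system is generically overdetermined, and the example shows it is not accidentally consistent in at least one case.

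Your route is cleaner and strictly stronger: rather than counting cycle constraints, you invoke Lemma~\ref{the:mappingUn} directly to conclude that the symmetric realizations of a detailed-balanced $\MM_u$ form the one-parameter family $\alpha\,\MM_u\PPi$, and then match the diagonal against the prescribed layer out-degrees to obtain the explicit solvability criterion that $d^i_u(\mathrm{out})/(\pi_i^{(u)}\MM_u^{ii})$ be constant in $i$. This yields a sharp necessary-and-sufficient condition (with uniqueness when it holds), whereas the paper's argument only establishes that the solvable locus is thin. The two obstacles you flag dissolve on inspection: Lemma~\ref{the:mapping} is stated as an exhaustive description of $\AAcal_{\MM}$, so there is no realization of $\MM_u$ outside the family $\MM_u\GGamma$ to worry about; and since reversibility is a hypothesis on the input $\MM_u$, the Kolmogorov cycle identities are data rather than additional constraints on $\XX_u$, which is exactly why your argument can bypass the loop-counting altogether.
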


In our example \figref{fig:joint}, we have $d^{p}_a = 3$ and
$$\XX_a^{pe} = \frac{M_a^{pe}}{M_a^{pp}}d^{p}_a = \frac{0.1\times 3}{0.2+0.4} = \frac{1}{2}.$$
If $\XX_u$ is undirected as shown in \figref{fig:joint}, we have
$$M_a^{ep} = \frac{\XX_a^{ep}}{\sum_r \XX_a^{er}} = \frac{0.5}{\sum_r \XX_a^{er}}.$$
If the inputs does not satisfy the above constrain, there will not be any feasible solution to Formulation \ref{prob:basic}.

Other overdetermined systems can arise when we have some direct measures of the inter-layer structures, and approximate solutions can be found by minimizing some error terms. In real applications, however, it is much more likely that we have less empirical measures, and we will be facing systems with additional degrees of freedoms. Such underdetermined systems leave spaces for other considerations, and are often associated with network design and other optimization Formulations.

\subsection{Underdetermined Compositions}
In our combined social networks example \figref{fig:example}, we might not know each user's message routing strategies, but we can can track the marginal distribution among the layers of how the messages are propagated for each user. In mixed membership community models \cite{Airoldi_mixed_2008}, we can measure or infer the percentage of edge type each vertex is associated with. Such information is captured by the stationary distributions of $\XX_u$ (for example, the vertical slice in \figref{fig:joint}). In these situations, we can restate the Formulation as

\begin{formulation}
\label{prob:partial}
{\textbf{Network composition with inter-layer stationary  distributions}}\\
\noindent Given transformed network layers: $G^1= (V, E^1,\WW^1), G^2= (V, E^2, \WW^2), ..., G^l= (V, E^l, \WW^l)$, and the stationary distributions over the layers $\pi_u$ for each vertex $u\in V$,
compose a super adjacency matrix $\mathbb{W}$ satisfying Markovian consistency.
\end{formulation}

\algref{alg:superAdj} remains a valid framework for Formulation \ref{prob:partial}. However, there will generally be no unique solution with \algref{alg:interEdges} as the subroutine. The inter-layer stationary  distributions amounts to $l-1$ constrains by considering the normalization conditions. This leads to
$$l(l-1) - (l-1) = (l-1)^2
$$
degrees of freedom for each user. A algorithmic solution to this underdetermined system can be specified as \algref{alg:stationary}.

\begin{algorithm}
\caption{Building inter-layer block $\XX_u$ with stationary distributions}
\label{alg:stationary}
\noindent {\textbf{Input}:} weighted network layers: $G_1= (V, E^1, \AA^1), G_2= (V, E^2, \AA^2),..., G_2= (V, E^l, \AA^l)$,  and n $l\times 1$ stationary distribution vector $\pi_u$ for each vertex $u\in V$.\\
{\textbf{Algorithm}}
\begin{itemize}
\item Create a $l\times l$ matrix $\XX_u$ with $l^2$ free variables
\item Constrains the diagonal elements with $\XX^{ii}_u = d^{i}_u(out)$
\item Solve for the off diagonal elements with the constrains
$$\forall i \neq j, \frac{\pi_u^i}{\pi_u^j} = \sum_{\forall r}\frac{\XX^{ir}_u}{\XX^{jr}_u}$$
\end{itemize}
{\textbf{Output}} Block $\XX_u$ and repeat for each $u\in V$\\
\end{algorithm}

If we assume the underlying network is undirected, the symmetry will reduce the degree of freedom to
$$\frac{l(l-1)}{2} - (l-1) = \frac{(l-2)(l-1)}{2}.
$$
With $l=2$, we recover a fully determined system. In \algref{alg:stationary}, the solution will be
$$\XX^{12}_u = \frac{\pi_u^1 d^{2}_u - (1-\pi_u^1)d^{1}_u}{(1- 2\pi_u^1)} = \frac{\pi_u^1 (d^{2}_u+d^{1}_u) - d^{1}_u}{(1- 2\pi_u^1)}.
$$
For it to be feasible, we need $\XX^{12}_u\geq 0$ or
$$0.5 \leq \pi_u^1 \leq \frac{d^{1}_u}{d^{1}_u+d^{2}_u}\ or\  0.5 \geq \pi_u^1 \geq \frac{d^{1}_u}{d^{1}_u+d^{2}_u}.$$

For underdetermined systems in general, we can specify an optimization objective function and use the additional degrees of freedom for network designs. Assuming the feasible solutions to \algref{alg:stationary} form a family of $\{\WWW\}$ whose random-walk Markov process is consistent with the constrains, we may have
\begin{itemize}
 \item Minimum network volume: $\min_{\WWW\in\{\WWW\}} \sum_{u,i,j} \WW_u^{ij} $;
 \item Maximum conductance: $\max_{\WWW\in\{\WWW\}} \min_{S\in \mathbb{V}}\frac{cut(S)}{\sum_{u\in S}d_u^{out}}$, where $\mathbb{V}$ is the super composed vertex set with layer copies;
\end{itemize}
as potential objective functions.

Another common scenario leading to underdetermined systems is when we have inter-layer distance measures. Fully specified pair-wise layer distances amounts to $l(l-1)-1$ constrains, leaving a single degree of freedom which can also be interpreted as the scaling of the inter-layer edge weights relative to their intra-layer counterparts. 

This construction is particularly suitable for combining time series of networks into multilayer structures. The temporal structure forms a one dimensional line. In the case when all vertices in the same layer shares the same time stamp, and we measure the distance between layers simply by the time difference, inter-layer adjacencies $\XX_u$ become the same for all vertices. Similarly, the degree of freedom for each vertex combines into a global parameter of inter-layer strengths. Such ``layer coupled'' multilayer structures have appeared in many previous studies as we discussed in \secref{sec:introduction}.

We will demonstrate how \algref{alg:stationary} and inter-layer distance models might be applied to real data set in \secref{sec:exp}.

\section{Empirical examples}
\label{sec:exp}
We apply the framework to study real world data sets. We demonstrate that community structure in a multi-layer network is sensitive to details of the inter-layer and intra-layer dynamics.
Community structure is produced through graph bisections using the sweeping algorithm in \cite{Ghosh2014KDD}.
\subsection{Impact of layer transformations}
\begin{figure}
    \includegraphics[width=0.42\textwidth]{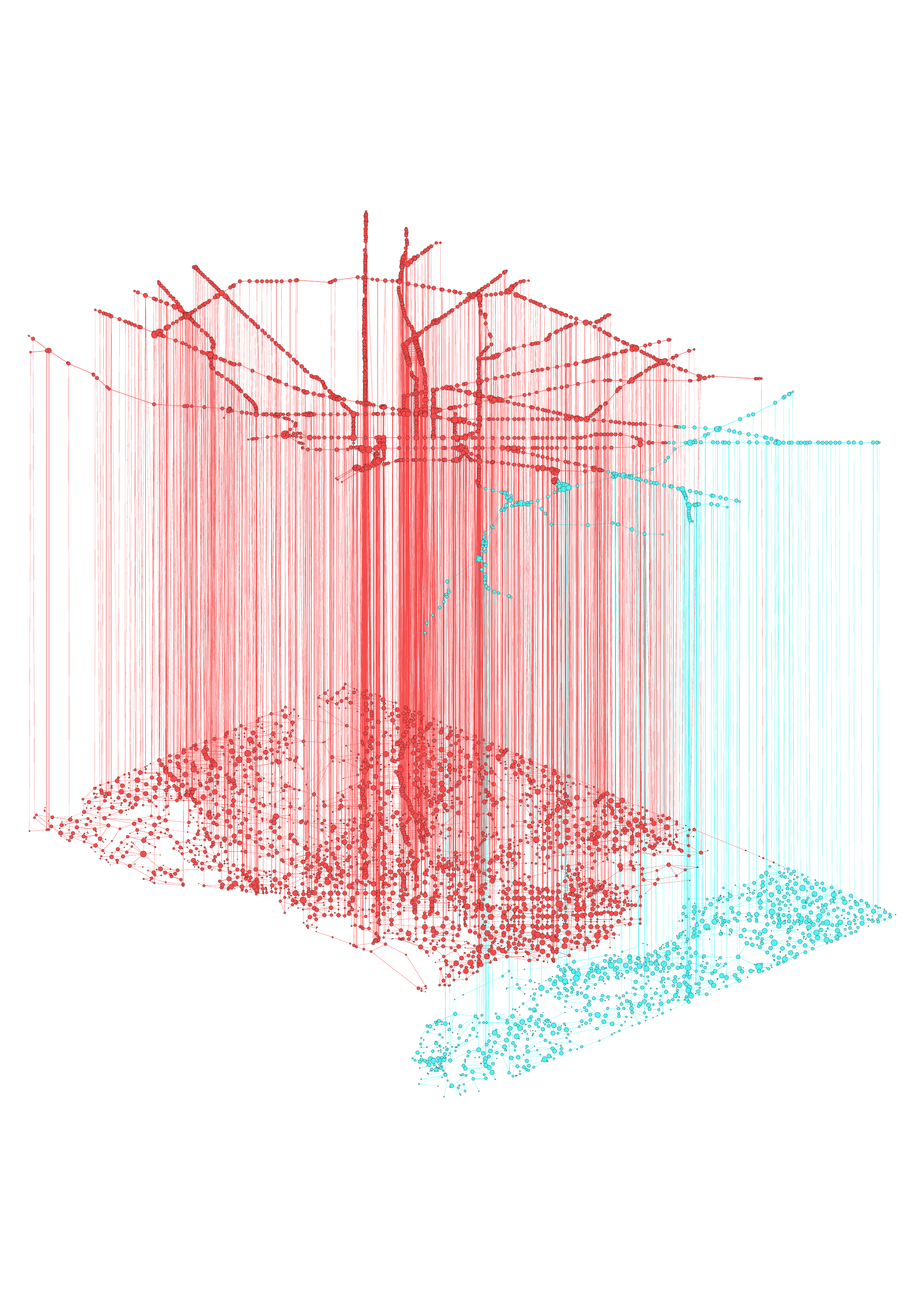}
    \includegraphics[width=0.42\textwidth]{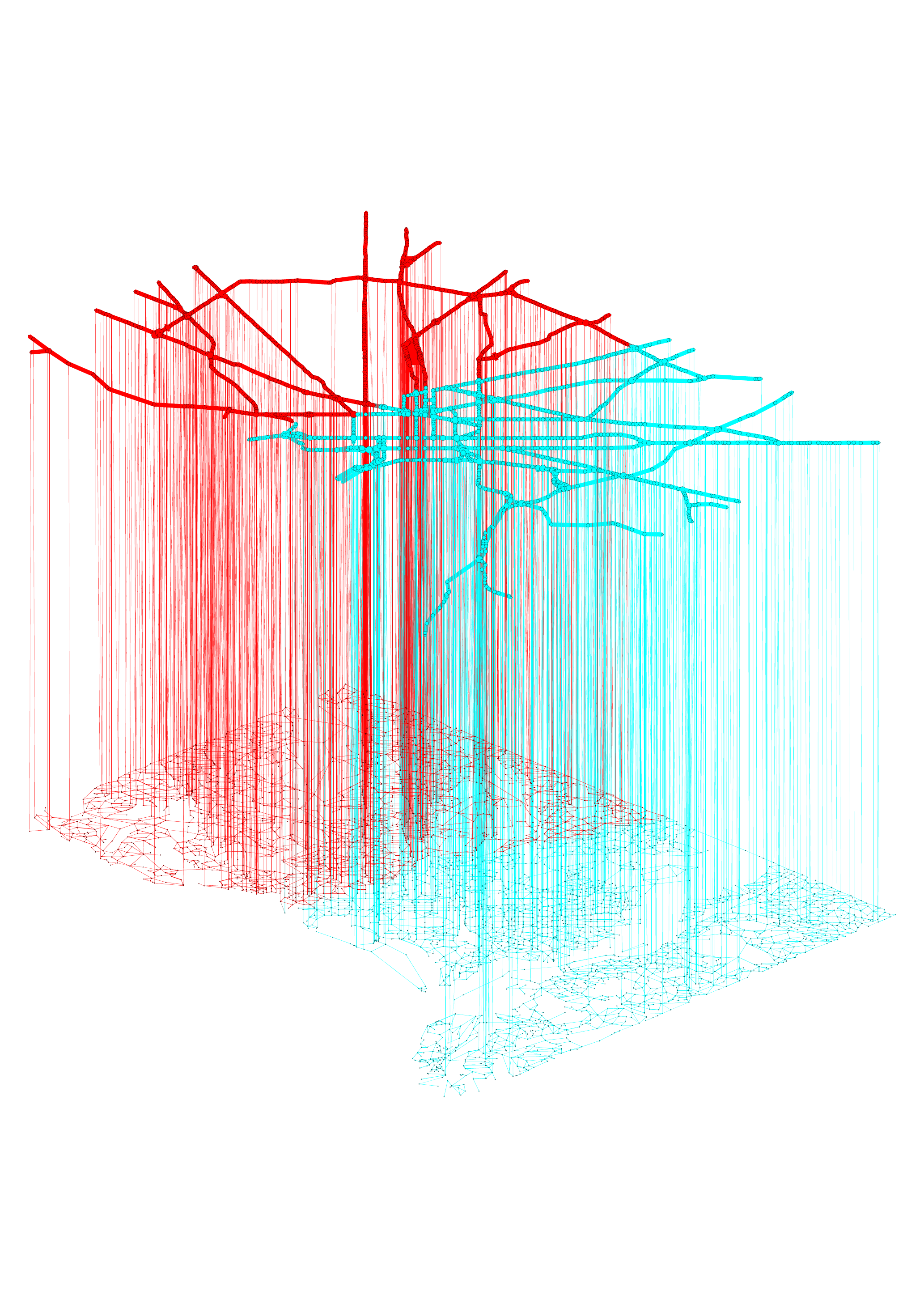}
    \caption{Bisection of road networks in DC with different composition rules}
\label{fig:road}
\end{figure}

To illustrate how layer transformation affect the structure of the resulting multi-layer network, we use the road network in the city of Washington DC. As shown in \figref{fig:road}, we modeled local roads and highways as two layers, with inter-layer edges representing highway entrances and exits. The data are based on the attributed undirected network provided by the 9th DIMACS implementation challenge\textendash Shortest Paths \cite{demetrescu20069th}. The top highway layer combines road category $A1, A2, A3$. Inter-layer edges are constructed by matching vertex labels at both layers. After removing the disconnected components, we have a multi-layer network with 10834 vertices and 28137 edges.

For comparison, we first used a conventional construction. Each highway connection received a weight of 2.0 while a local road or a highway entrances and exits received weights of 1.0. This is based on the average speed estimates listed by road categories \cite{demetrescu20069th}. Applying the graph bisection algorithm, we identified the traffic bottleneck along the Anacostia River as demonstrated on the top of \figref{fig:road}. 

In contrast, when we applied our general composition framework \algref{alg:superAdj} to the dataset, a very different picture emerged. Using the total degree as a measure, the conventional construction leads to a $12.7\%$ traffic load on the highways. On the bottom of \figref{fig:road}, we scaled the high way layer by a constant factor of 3.14, leading to a $20\%$ highway traffic load. We also introduced traffic delays at each intersection, with a delay factor $\tau$ proportional to its degree. We lack empirical observations for inter-layer Markovian matrices $M_u$. Here we made a simple assumption that vertices with both highway and local road accesses all follow a $20-80\%$ inter-layer stationary distribution with 4-times the traffic on the highway layer. This allows us to use \algref{alg:stationary} in the last step of \algref{alg:superAdj} and recover a fully determined system.

Using the same graph bisection algorithm, the new composition finds the traffic bottleneck at the center of the city, as demonstrated on the bottom of \figref{fig:road}. Based on more realistic traffic patterns, our composition framework puts more weight on the highway layer whose structure is less bottlenecked by the Anacostia River.

\subsection{Impact of inter-layer edges}
We illustrate the impact of inter-layer compositions using a multilayer coauthorship networks. Specifically, we represent coauthorship networks over time as a multi-layer network,  where each layer corresponds to a snapshot of the coauthorship network at some time.
As in other real world applications, however, the inter-layer dynamics is difficult to specify. Here we use the interlayer distance approach introduced in \secref{sec:interEdges}.

\figref{fig:coauthor} presents a collaboration networks centered around four authors: Shang-hua Teng, Daniel Spielman, Gary Miller and Kristina Lerman, as well as their coauthors on papers appearing in the ACM Digital Library. Each layer represents a separate time period: from bottom to top, it is 1985-1994, 1995-2004, and 2005-2014. The weight of an intra-layer edge represents the number of times two authors collaborated during that time period, while inter-layer edges connect the same author between neighboring decades, with weights reflecting the relative time distances.

\begin{figure}
  \begin{subfigure}[b]{0.32\textwidth}
    \centering
    \includegraphics[width=\textwidth]{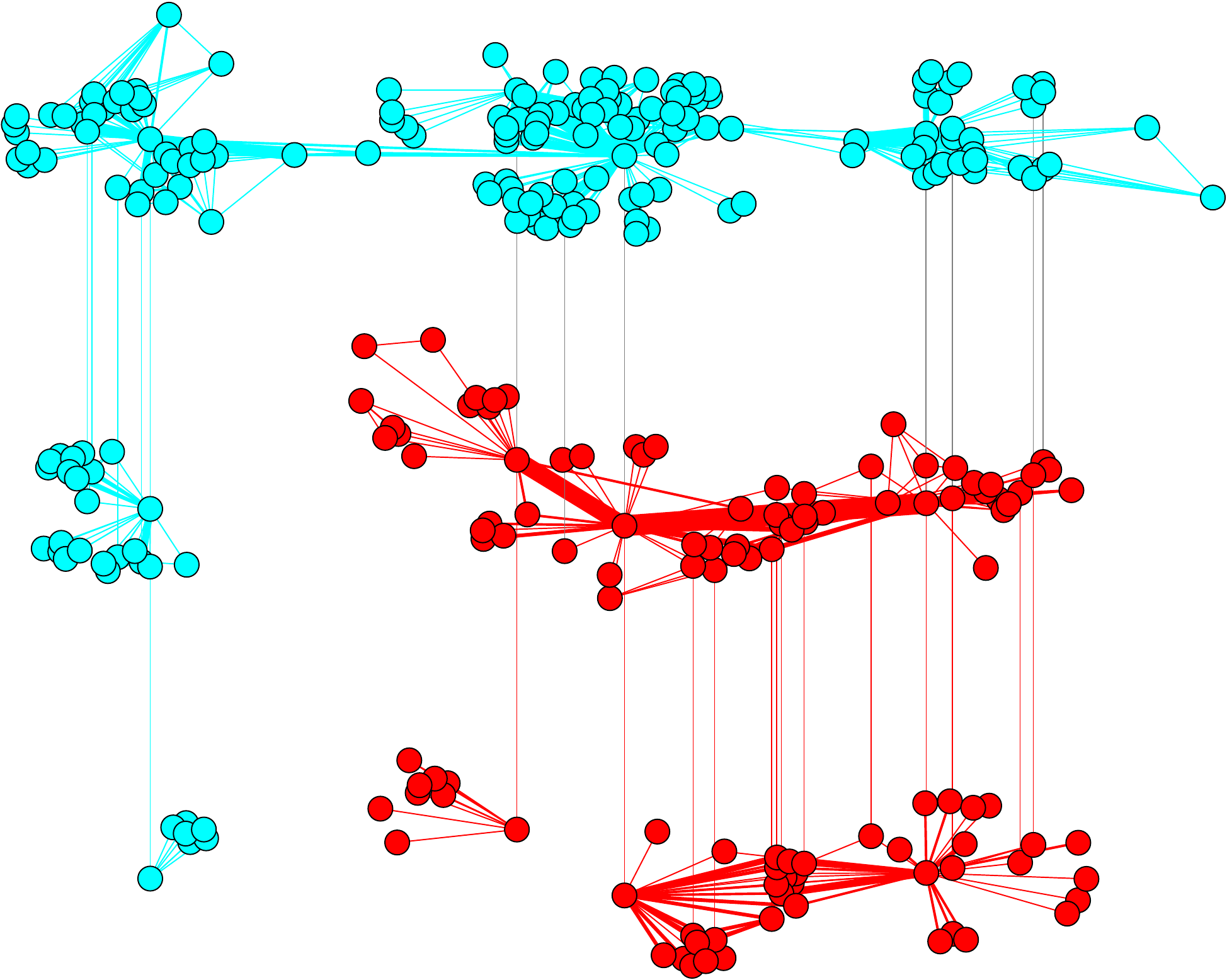}
    \caption{}
  \end{subfigure}
  \begin{subfigure}[b]{0.32\textwidth}
    \centering
    \includegraphics[width=\textwidth]{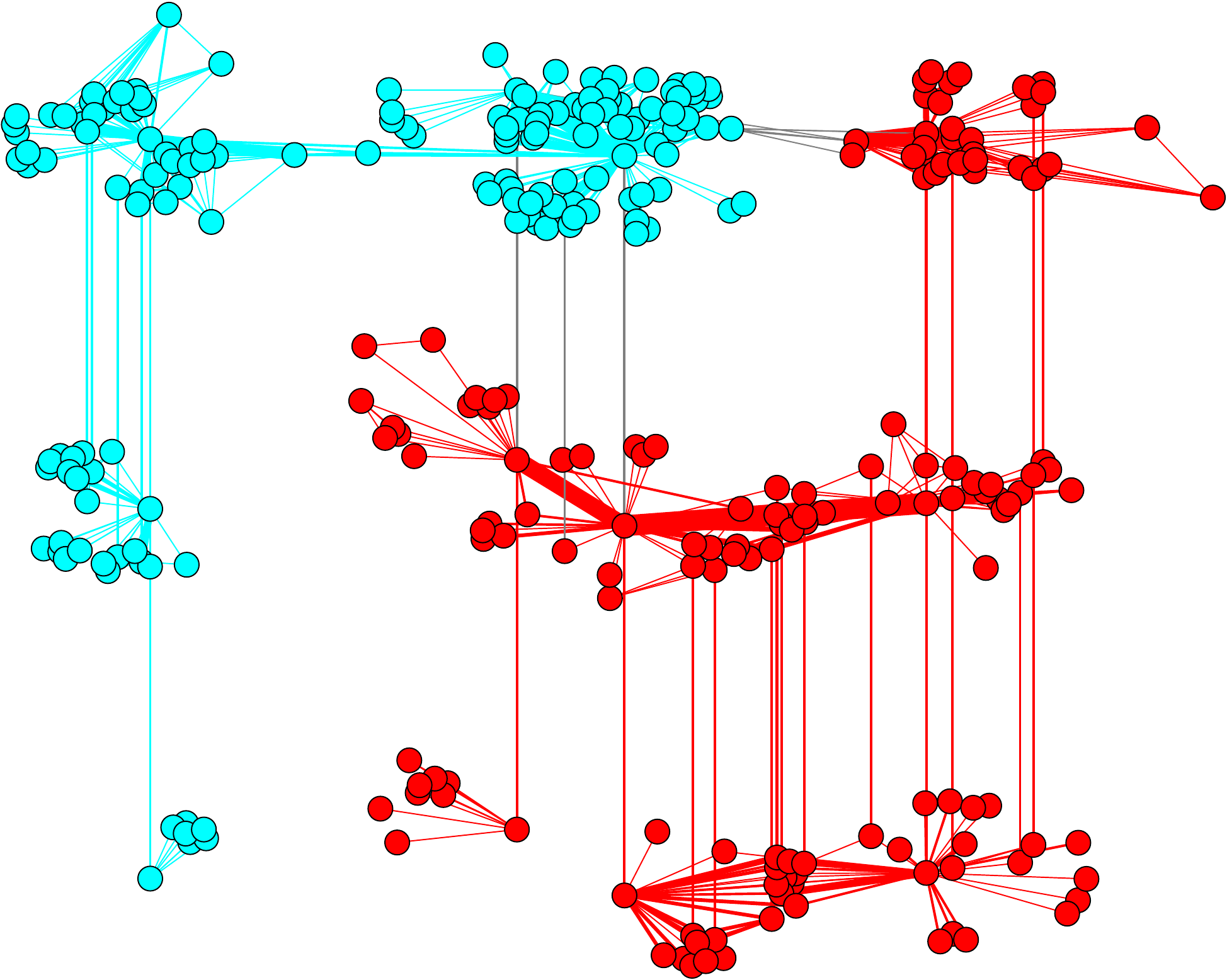}
    \caption{}
  \end{subfigure}
  \begin{subfigure}[b]{0.32\textwidth}
    \centering
    \includegraphics[width=\textwidth]{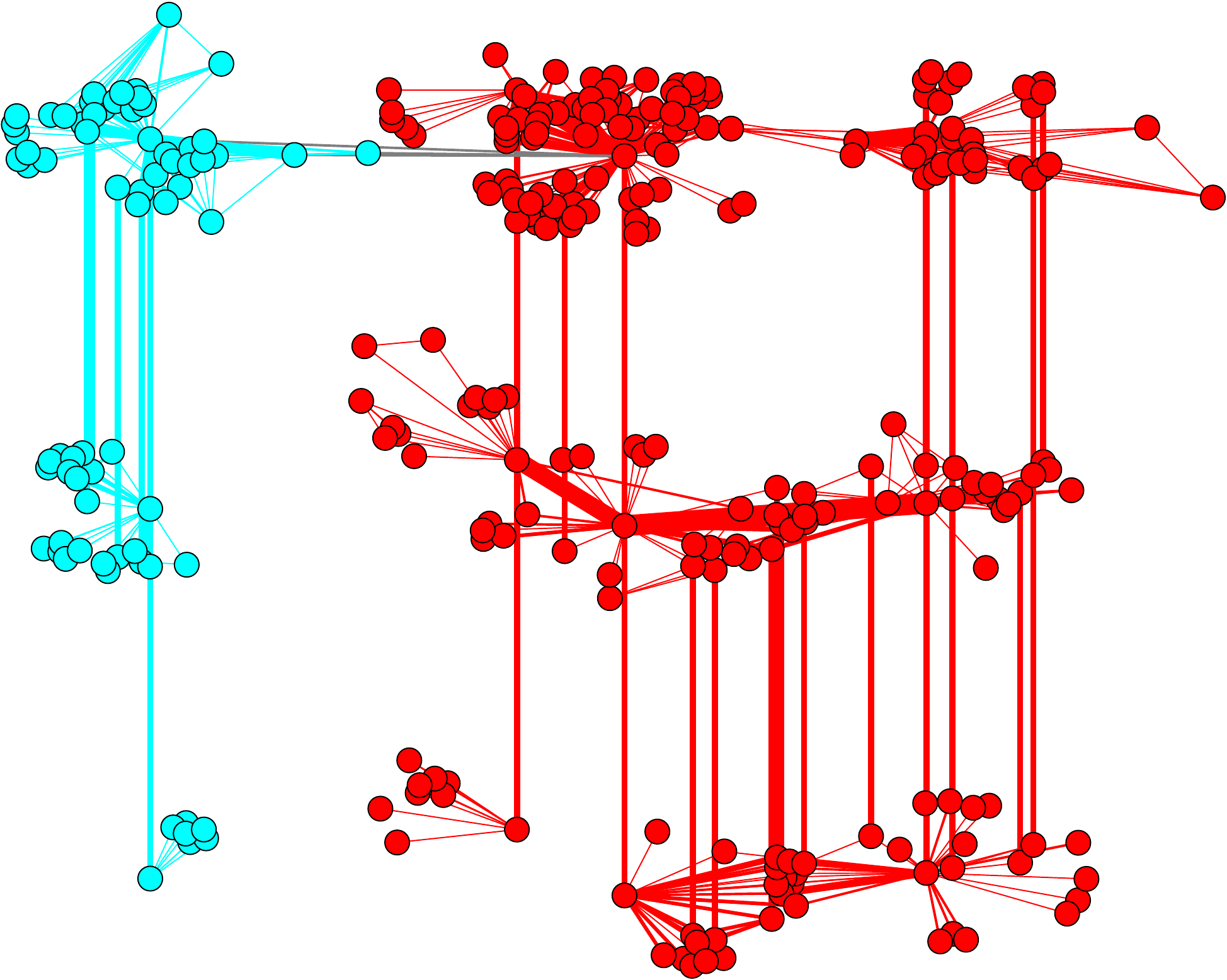}
    \caption{}
  \end{subfigure}
\caption{Bisection of coauthor networks using different inter-layer strengths}
\label{fig:coauthor}
\end{figure}

Since the distance between layers is uniform, all inter-layer edges have the same weight. Changing the global parameter of inter-layer edge strength, 
we get three different bisections of the network into communities using the normalized Laplacian \cite{Ghosh2014KDD}. In \figref{fig:coauthor}(a), the weight of each inter-layer edge 0.5 (half of the weight of a collaboration), resulting in a mostly horizontal bisection. In \figref{fig:coauthor}(c), the weight of inter-layer edges is 5.0, resulting in a vertical bisection, as separating layers is much more expensive. The most interesting case is when we set the weight on inter-layer edges to 1.0. For the earlier two decades, authors surrounding Shang-hua Teng and his Ph.D. advisor Gary Miller forms the red community largely consist of theoretical computer scientists. In the latest decade, however, the algorithm put Shang-hua Teng into the cyan group together with
Kristina Lerman where the focus has switched to graph mining and modeling. The algorithm has consistently put Shang-hua Teng and Daniel Spielman in the same community as they collaborated extensively throughout the years.

\section{Conclusion}
In this work, we proposed a mathematically principled framework for multilayer network composition based on a unifed dynamical process. We developed theorems and algorithms to construct a joint structure that can reflect the different intra and inter-layer dynamics in the inputs. 

We also discussed and demonstrated a few practical situations when the system is not fully determined. In future works, we plan to explore approximate solutions for overdetermined systems and investigate in greater details of the associated network design and optimization problems for overdetermined systems.

\bibliographystyle{abbrv}
\bibliography{references}

\end{document}